\documentclass[letterpaper, USenglish, cleveref, pdfa]{lipics-v2021}

\usepackage{amsmath,amssymb,amsfonts}
\usepackage{algorithmic}
\usepackage{graphicx}
\usepackage{textcomp}
\usepackage{xcolor}
\def\BibTeX{{\rm B\kern-.05em{\sc i\kern-.025em b}\kern-.08em
    T\kern-.1667em\lower.7ex\hbox{E}\kern-.125emX}}
\usepackage{cite}
\usepackage{graphicx}
\usepackage{textcomp}
\usepackage[numbers]{natbib}
\usepackage{underscore}
\usepackage{siunitx}
\usepackage{tikz}
\usepackage[linesnumbered,lined, vlined, ruled]{algorithm2e}

\SetCommentSty{mycommfont}
\usepackage{makecell}
\usepackage{subcaption}
\usepackage{tablefootnote}
\usepackage[shortlabels]{enumitem}
\usepackage{xcolor}
\usepackage{nameref}
\definecolor{ForestGreen}{rgb}{0.1333,0.5451,0.1333}
\definecolor{DarkRed}{rgb}{0.8,0,0}
\definecolor{Red}{rgb}{1,0,0}
\hypersetup{
linktocpage=true,
colorlinks,
linkcolor=DarkRed,citecolor=ForestGreen,
bookmarksopen,bookmarksnumbered}
\usepackage{multirow}
\usepackage{soul} 

\crefname{figure}{Figure}{Figures}

\newcommand{\calS}{\mathcal{S}}
\newcommand{\calT}{\mathcal{T}}

\newcommand{\abs}[1]{\vert #1 \vert}
\newcommand{\Rec}{\normalfont\text{Ret}}
\newcommand{\Par}{\normalfont\text{Par}}
\newcommand{\Count}{\normalfont\text{Count}}
\newcommand{\Anc}{\normalfont\text{Anc}}
\newcommand{\Dep}{\normalfont\text{Dep}}
\newcommand{\ExDep}{\normalfont\text{ExtDep}}
\newcommand{\OPT}{\normalfont\text{OPT}}
\newcommand{\ALG}{\normalfont\text{ALG}}
\newcommand{\MSR}{\textsc{MSR}}
\newcommand{\MMR}{\textsc{MMR}}
\newcommand{\BSR}{\textsc{BSR}}
\newcommand{\BMR}{\textsc{BMR}}
\newcommand{\NP}{\textsc{NP}}
\newcommand{\DTIME}{\textsc{DTIME}}
\newcommand{\calA}{\mathcal{A}}
\newcommand{\calR}{\mathcal{R}}

\newcommand{\DP}{\normalfont\text{DP}}

\newcommand{\DPBMR}{\textsf{DP-BMR}}
\newcommand{\DPMSR}{\textsf{DP-MSR}}
\newcommand{\LMG}{\textsf{LMG}}
\newcommand{\MP}{\textsf{MP}}
\newcommand{\LMGA}{\textsf{LMG-All}}

\newcommand{\minuseq}{\mathrel{-}=}

\newcommand{\bob}[1]{{\color{blue} Bob: #1}}
\newcommand{\sofia}[1]{{\color{magenta} sofia: #1}}
\newcommand{\pat}[1]{{\color{red} pat: #1}}
\newcommand{\samir}[1]{{\color{brown} samir: #1}}
\newcommand{\koyel}[1]{{\color{purple} Koyel: #1}}

\renewcommand{\bob}[1]{}
\renewcommand{\sofia}[1]{}
\renewcommand{\pat}[1]{}
\renewcommand{\samir}[1]{}
\renewcommand{\koyel}[1]{}

\pdfoutput=1 
\hideLIPIcs  


\title{To Store or Not to Store: a graph theoretical approach for Dataset
Versioning}

\titlerunning{To Store or Not to Store}

\author{Anxin Guo\footnote{These authors contributed equally to this work.}}{Computer Science Department, Northwestern University, Evanston, USA}{anxinbguo@gmail.com}{}{}

\author{Jingwei Li\footnotemark[1]{}}{Department of Industrial Engineering and Operations Research, Columbia University, New York City, USA}{jl6639@columbia.edu}{}{}

\author{Pattara Sukprasert}{Databricks, San Francisco, USA}{pattara.sk127@gmail.com}{}{}

\author{Samir Khuller}{Computer Science Department, Northwestern University, Evanston, USA}{samir.khuller@northwestern.edu}{}{}

\author{Amol Deshpande}{Department of Computer Science, University of Maryland, College Park, USA}{amol@umd.edu}{}{}

\author{Koyel Mukherjee}{Adobe Research, Bangalore, India}{komukher@adobe.com}{}{}

\authorrunning{A. Guo, J. Li, P. Sukprasert, S. Khuller, A. Deshpande, and K. Mukherjee} 

\Copyright{Anxin Guo, Jingwei Li, Pattara Sukprasert, Samir Khuller, Amol Deshpande, Koyel Mukherjee}

\ccsdesc[500]{Theory of computation~Approximation algorithms}
\ccsdesc[300]{Information systems~Data management systems}

\keywords{Version Control Systems, Data Management, Approximation Algorithm, Combinatorial Optimization.}

\relatedversion{} 

\acknowledgements{A. Guo, J. Li, P. Sukprasert, and S. Khuller were supported by a gift from Adobe Research. Significant part of this work was done while J. Li was an undergraduate student at Northwestern Univerty and while P. Sukprasert was a Ph.D. candidate at Northwestern University.}

\supplementdetails[linktext=GitHub]{Source code}{https://github.com/Sooooffia/Graph-Versioning/tree/main} 

\nolinenumbers 


\begin{document}

\maketitle

\begin{abstract}
In this work, we study the \emph{cost efficient data versioning problem}, where the goal is to optimize the storage and reconstruction (retrieval) costs of data versions, given a graph of datasets as nodes and edges capturing edit/delta information. One central variant we study is \textsc{MinSum Retrieval (MSR)} where the goal is to minimize the total retrieval costs, while keeping the storage costs bounded. This problem (along with its variants) was introduced by Bhattacherjee et al.~[VLDB'15]. While such problems are frequently encountered in collaborative tools (e.g., version control systems and data analysis pipelines), to the best of our knowledge, no existing research studies the theoretical aspects of these problems. 

We establish that the currently best-known heuristic, LMG (introduced in Bhattacherjee et al.~[VLDB'15]) can perform arbitrarily badly in a simple worst case. Moreover, we show that it is hard to get $o(n)$-approximation for MSR on general graphs even if we relax the storage constraints by an $O(\log n)$ factor. Similar hardness results are shown for other variants. Meanwhile, we propose poly-time approximation schemes for tree-like graphs, motivated by the fact that the graphs arising in practice from typical edit operations are often not arbitrary. As version graphs typically have low treewidth, we further develop new algorithms for bounded treewidth graphs. 

Furthermore, we propose two new heuristics and evaluate them empirically. First, we extend LMG by considering more potential ``moves'', to propose a new heuristic LMG-All. LMG-All consistently outperforms LMG while having comparable run time on a wide variety of datasets, i.e., version graphs. Secondly, we apply our tree algorithms on the minimum-storage arborescence of an instance, yielding algorithms that are qualitatively better than all previous heuristics for MSR, as well as for another variant \textsc{BoundedMin Retrieval} (BMR). 
\end{abstract}


\section{Introduction}\label{sec:intro}
\bob{TODO:
\begin{enumerate}
    \item Change repo to public repo. 
\end{enumerate}}

The management and storage of data versions has become increasingly important. As an example, the increasing usage of online collaboration tools allows many collaborators to edit an original dataset simultaneously, producing multiple versions of datasets to be stored daily. 
Large number of dataset versions also occur often in industry data lakes \cite{dataLake} where huge tabular datasets like product catalogs might require a few records (or rows) to be modified periodically, resulting in a new version for each such modification. Furthermore, in Deep Learning pipelines, multiple versions are generated from the same original data for training and insight generation. At the scale of terabytes or even petabytes, storing and managing all the versions is extremely costly in the aforementioned situations~\cite{koyel-ICDE}. Therefore, 
it is no surprise that data version control is emerging as a hot area in the industry \cite{git, pachyderm, DVC, TerminusDB, LakeFS, Dolt}, and even popular cloud solution providers like Databricks are now capturing data lineage information, which helps in effective data version management \cite{databricks}.

In a pioneering paper, Bhattacherjee et al.~\cite{versioningAmol} proposed a model capturing the trade-off between \textit{storage} cost and \textit{retrieval} (recreation) cost. 
The problems they studied can be defined as follows. Given dataset versions 
and a subset of the \emph{``deltas''} between them,
 find a compact representation that minimizes the overall storage as well as the retrieval costs of the versions. This involves a decision for each version:  either we \textit{materialize} it (i.e., store it explicitly) or we store a ``delta'' and rely on edit operations to retrieve the version from another materialized version if necessary. The downside of the latter is that, to retrieve a version that was not materialized, we have to incur a computational overhead that we call \textit{retrieval cost}. 

\cref{fig:version-graph}, 
taken from Bhattacherjee et al.\cite{versioningAmol}, illustrates the central point through different storage options. (i) shows the input graph, with annotated storage and retrieval costs 
. If the storage size is not a concern, we should store all versions as in (ii). From (iii) to (iv), it is clear that, by materializing $v_3$, we shorten the retrieval costs of $v_3$ and $v_5$.

\begin{figure}[t]
  \centering
  \includegraphics[width=0.8\columnwidth]{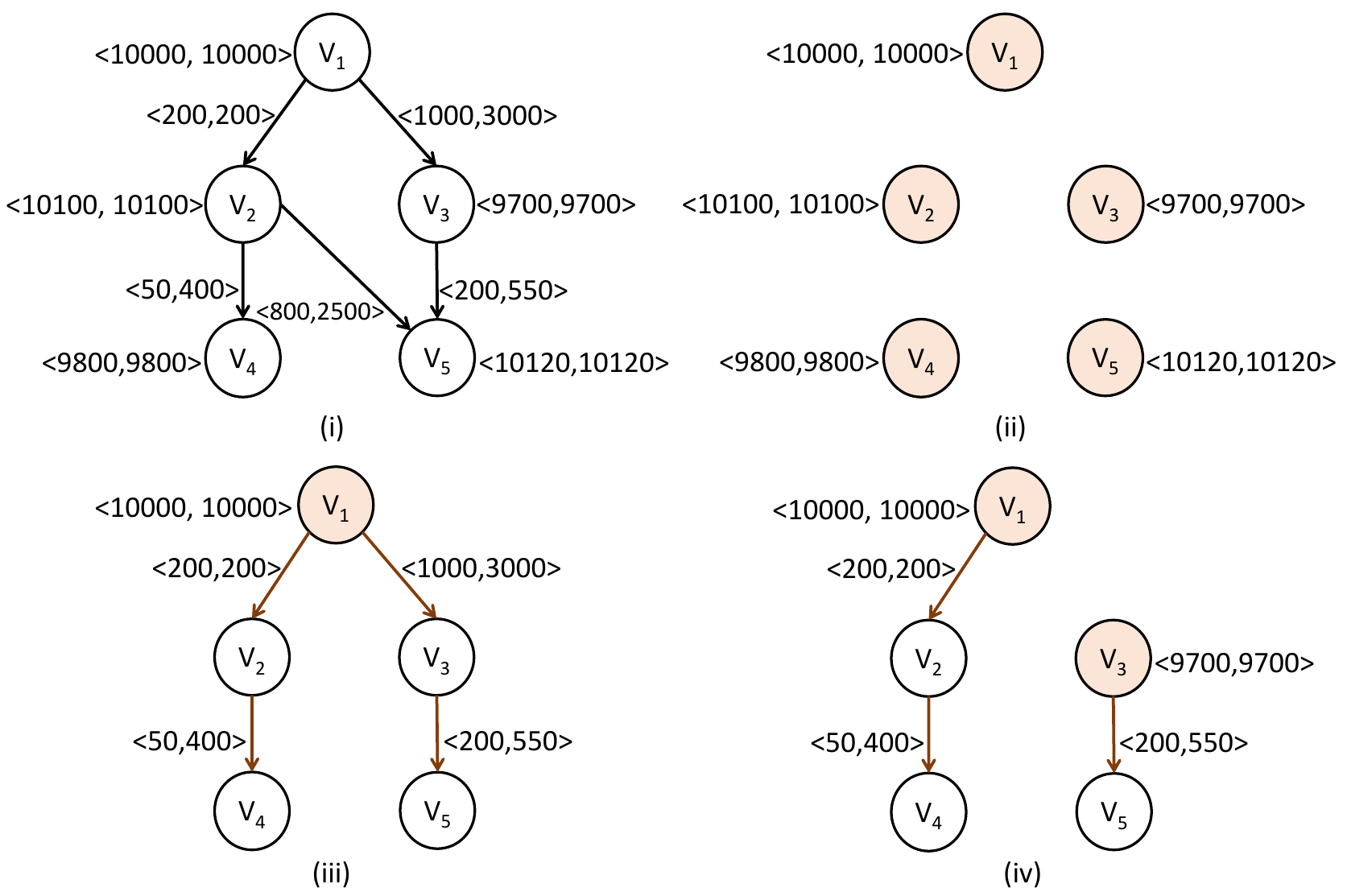}
 \caption{(i) A version graph over 5 datasets – annotation  ⟨a, b⟩  indicates
      a storage cost of $a$ and a retrieval cost of $b$;
     (ii, iii, iv) three possible storage graphs. The figure is taken from~\cite{versioningAmol}}
\label{fig:version-graph}
\end{figure}

This retrieval/storage trade-off leads to combinatorial problems of minimizing one type of cost, given a constraint on the other. Moreover, as an objective function, the retrieval cost can be measured by either the maximum or total (or equivalently average) retrieval cost of files. This yields four different optimization problems (Problems 3-6 in~\cref{table:problems1-6}). While the first two problems in the table are easy, the other four turn out to be NP-hard and hard to approximate, as we will soon discuss. 

\renewcommand{\arraystretch}{1.5}
\begin{table}[ht]
\begin{center}
    \footnotesize
    \begin{tabular}{|c||c|c|}
        \hline
        \textbf{Problem Name} & \textbf{Storage} & \textbf{Retrieval}\\
        \hline
        \hline
        \textsc{Minimum Spanning Tree}  & min & $R(v) < \infty,~\forall v$\\
        \hline
        \textsc{Shortest Path Tree}  & $< \infty$ & min $\{ \max_{v} R(v)\}$ \\
        \hline
        \textsc{MinSum Retrieval (MSR)} & $\leq \calS$ & min$\{ \sum_{v} R(v)\}$ \\
        \hline
        \textsc{MinMax Retrieval (MMR)} & $\leq \calS$ & min $\{ \max_{v} R(v)\}$ \\
        \hline
        \textsc{BoundedSum Retrieval (BSR)} & min & $\sum_v R(v) \leq \calR$ \\
        \hline
        \textsc{BoundedMax Retrieval (BMR)} & min & $ \max_v R(v) \leq \calR$ \\
        \hline
    \end{tabular}
\end{center}
\caption{Problems 1-6. Here, $R(v)$ is the retrieval cost of version $v$, while $\calR,\calS$ are the retrieval and storage constraints, respectively.}
\label{table:problems1-6}
\end{table}



There are some follow-up works on this model~\cite{simpleFollow-up,Derakhshan22material,huang20OrpheusDB}. However, those either formulate new problems in different use cases~\cite{Derakhshan22material,manne22CHEX,huang20OrpheusDB} or implement a system incorporating the feature to store specific versions and deltas~\cite{huang20OrpheusDB,Wang18forkbase,schule19Versioning}. We will discuss this in more detail in \cref{subsec:related-works}.

\subsection{Our Contributions}

We provide the first set of \textit{approximation algorithms} and \textit{inapproximability results} for the aforementioned optimization problems under various conditions. Our theoretical results also give rise to practical algorithms which perform very well on real-world data. 
\begin{table}[h]
\begin{center}
    \footnotesize
    \begin{tabular}{|c|c|c|c|c|}
        \hline
        \textbf{Problem} & \textbf{Graph type} & \textbf{Assumptions} & \textbf{Inapproximability} \\
        \hline
        \hline
        \multirow{3}{*}{MSR} & arborescence & \multirow{8}{*}{\makecell{Triangle inequality \\ $r=s$ on edges\footnote{}}} & 1 \\\cline{4-4}\cline{2-2}
        
        & undirected && $1+\frac{1}{e}-\epsilon$ \\\cline{4-4}\cline{2-2}
        
        & general && $\Omega(n)$\footnote{This is true even if we relax $\calS$ by $O(\log n)$.}  \\\cline{1-2}\cline{4-4}
        
        \multirow{2}{*}{MMR} & undirected && $2-\epsilon$ \\\cline{4-4}\cline{2-2}
        
        & general && $\log^*n - \omega(1)$ \\\cline{1-2}\cline{4-4}
        
        \multirow{2}{*}{BSR} & arborescence && 1 \\\cline{4-4}\cline{2-2}
        
        & undirected && $(\frac{1}{2}-\epsilon)\log n$ \\\cline{1-2}\cline{4-4}
        
        BMR & undirected && $(1-\epsilon)\log n$ \\
        \hline
    \end{tabular}
\end{center}
\caption{Hardness results}\label{table:Hardness}
\end{table}

\footnotetext[2]{Both are assumptions in previous work~\cite{versioningAmol} that simplify the problems. We note that our algorithms function even without these assumptions.
}
\footnotetext[3]{This is true even if we relax $\calS$ by $O(\log n)$.}

\subparagraph*{\MMR{} and \BMR{}.} In \cref{sec:hardness} we prove that it is hard to approximate \MMR{} within $\log^* n$\footnote{$\log^* n$ is ``iterated logarithm'', defined as the number of times we iteratively take logarithm before the result is at most $1$.} factor and \BMR{} within $\log n$ factor on general inputs. Meanwhile, in \cref{sec:MMRBMR} we give a polynomial-time dynamic programming (DP) algorithm for the two problems on \textit{bidirectional trees}, i.e., digraphs whose underlying undirected graph\footnote{The undirected graph formed by disregarding the orientations of the edges.} is a tree. These inputs capture the cases where new versions are generated via edit operations. 

We also briefly describe an FPTAS (defined below) for \MMR{}, analogous to the main result for \MSR{} in \cref{sec:fptas}. 

\subparagraph*{\MSR{} and \BSR{}.} In \cref{sec:hardness} we prove that it is hard to design $\big(O(n),O(\log n)\big)$-bicriteria approximation\footnote{An $(\alpha,\beta)$-bicriteria approximation refers to an algorithm that potentially exceeds the constraint by $\alpha$ times, in order to achieve a $\beta$-approximation of the objective. See \cref{sec:prelim} for an example.}
for \MSR{} or $O(\log n)$-approximation for \BSR{}. It is also NP-hard to solve the two problems exactly on trees. 

On the other hand, we again use DP to design a fully polynomial-time approximation scheme (FPTAS) for \MSR{} on \textit{bounded treewidth graphs}. These inputs capture many practical settings: bidirectional trees have width 1, series-parallel graphs have width 2, and the GitHub repositories we use in (\cref{sec:experiments}) all have low treewidth.\footnote{\texttt{datasharing}, \texttt{styleguide}, and \texttt{leetcode} have treewidth 2,3, and 6 respectively.}

\subparagraph*{New Heuristics.} 
We improved \LMG{} into a more general \LMGA{} algorithm for solving \MSR{}. \LMGA{} outperforms \LMG{} in all our experiments and runs faster than \LMG{} on sparse graphs. 

Inspired by our algorithms on trees, we also propose two DP heuristics for MSR and BMR. Both algorithms perform extremely well 
even when the input graph is not tree-like. Moreover, there are known procedures for parallelizing general DP algorithms~\cite{stivala2010lock}, so our new heuristics are potentially more practical than previous ones, which are all sequential. 

\begin{table}[t]
\begin{center}
    \footnotesize
    \begin{tabular}{|l|c|c|c|}
        \hline
        \textbf{ Graphs} & \textbf{Problems} & \textbf{Algorithm} & \textbf{Approx.}  \\
        \hline
        \hline
        General Digraph & MSR & \textsf{LMG-All} & heuristic  \\
        \hline
        
        \multirow{2}{*}{Bounded Treewidth} & MSR \& MMR & \multirow{2}{*}{\textsf{DP-BTW}} & $1+\epsilon$ \\
        
        \cline{2-2}\cline{4-4}
        & BSR \& BMR & & $(1,1+\epsilon)$  \\
        
        \hline
        \multirow{2}{*}{Bidirectional Tree} & MMR & \multirow{2}{*}{\textsf{DP-BMR}} & \multirow{2}{*}{exact} \\
        \cline{2-2}
        & BMR & & \\
        \hline
    \end{tabular}
\end{center}
\caption{Algorithms summary. }\label{table:Algorithms}
\end{table}

\subsection{Related Works}
\label{subsec:related-works}
\subsubsection{\textbf{Theory}}
There was little theoretical analysis of the exact problems we study. 
The optimization problems are first formalized in Bhattacherjee et al.~\cite {versioningAmol}, which also compared the effectiveness of several proposed heuristics on both real-world and synthetic data.
 Zhang et al.\cite{simpleFollow-up} followed up by considering a new objective that is a weighted sum of objectives in MSR and MMR. They also modified the heuristics to fit this objective. 
There are similar concepts, including \emph{Light Approximate Shortest-path Tree (LAST)}~\cite{khuller93LAST} and \emph{Shallow-light Tree (SLT)}~\cite{kortsarz1997approximating, hajiaghayi2009approximating, HaeuplerEmbedding, marathe1998bicriteria, RRaviBroadcast, RezaBuyAtBulk}. 
However, this line of work focuses mainly on undirected graphs and their algorithms do not generalize to the directed case.
Among the two problems mentioned, SLT is closely related to MMR and BMR. 
Here, the goal is to find a tree that is \textbf{light} (minimize weight) and \textbf{shallow} (bounded depth).
To our knowledge, there are only two works that give approximation algorithms for directed shallow-light trees. Chimani and Spoerhase
\cite{chimani15NDesign} give a bi-criteria $(1+\epsilon, n^\epsilon)$-approximation algorithm that runs in polynomial-time.
Recently, Ghuge and Nagarajan~\cite{ghuge2022quasi} gave a $O(\frac{\log n}{\log\log n})$-approximation algorithm for \textit{submodular tree orienteering} that runs in quasi-polynomial time.
Their algorithm can be adapted
into $O(\frac{\log^2 n}{\log\log n})$-approximation for \BMR{}.
For \textsc{MSR}, 
their algorithm gives $\big ( O(\frac{\log^2 n}{\log\log n})  , O(\frac{\log^2 n}{\log\log n})  \big )$-approximation. The idea is to run their algorithm for many rounds, where the objective of each round is to \textit{cover as many nodes as possible}. 
\subsubsection{\textbf{Systems}} 
To implement a system captured by our problems, components spanning multiple lines of works are required.
For example, to get a graph structure, one has to keep track of history of changes.
This is related to the topic of data provenance~\cite{buneman00DataProvenance,simmhan2005survey}.
Given a graph structure, the question of modeling ``deltas'' is also of interest.
There is a line of work dedicated to studying how to implement \texttt{diff} algorithms in different contexts~\cite{hunt98Delta,burn98inplace,xia14Ddelta,macdonald2000file,suel2002zdelta}.


In the more flexible case, one may think of creating deltas without access to the change history. However, computing all possible deltas is too wasteful, hence it is necessary to utilize other approaches to identify similar versions/datasets.
Such line of work is known as dataset discovery or dataset similarlity~\cite{dataLake,jayawardana19DFS,fernandez18Aurum,brickley19GoogleDataSet,bogatu20DDD}.

Several follow-up works of Bhattacherjee et al.\cite{versioningAmol} have implemented systems with a feature that saves only selected versions to reduce redundancy. 
There are works focusing on version control for relational databases~\cite{brown21DSDB,huang20OrpheusDB,schule19Versioning,Wang18forkbase,chavan17Dex,maddox16Decibel,bhardwaj14DataHub,seering12EffVer} and works focusing on graph snapshots~\cite{Ying20pensive,khurana12EffSnap,manne22CHEX}.
However, since their focus was on designing full-fledged systems, 
the algorithms they proposed are rather simple heuristics with no theoretical guarantees.


\subsubsection{\textbf{Usecases}}
In a version control system such as git, our problem is similar to what \texttt{git pack} command aims to do.\footnote{https://www.git-scm.com/docs/git-pack-objects}
The original heuristic for \texttt{git pack}, as described in an IRC log, is to sort objects in particular order and only create deltas between objects in the same window.\footnote{https://github.com/git/git/blob/master/Documentation/technical/pack-heuristics.txt} It is shown in Bhattacherjee et al.~\cite{versioningAmol} that git's heuristic does not work well compared to other methods.

SVN, on the other hand, only stores the most recent version and the deltas to past versions~\cite{nagel2006subversion}. Other existing data version management systems include~\cite{pachyderm, DVC, TerminusDB, LakeFS, Dolt}, which offer git-like capabilities suited for different use cases, such as data science pipelines in enterprise setting, machine learning-focused, data lake storage, graph visualization, etc.

Though not directly related to our work, recently, there has been a lot of work exploring algorithmic and systems related optimizations for reducing storage and maintenance costs of data. For example, Mukherjee et al.~\cite{koyel-ICDE} proposes optimal multi-tiering, compression and data partitioning, along with predicting access patterns for the same. Other works that exploit multi-tiering to optimize performance include e.g., \cite{hermes, devarajan2020hcompress, devarajan2020hfetch, cheng2015cast} and/or costs, e.g., \cite{cheng2015cast, ERRADI2020110457, liu2019transfer, liu2021keep, si2022cost, kinoshita2021cost}. Storage and data placement in a workload aware manner, e.g., \cite{anwar2015taming, anwar2016mos, cheng2015cast} and in a device aware manner, e.g., \cite{vogel2020mosaic, lasch2022cost, lasch2021workload} have also been explored. \cite{devarajan2020hcompress} combine compression and multi-tiering for optimizing latency.

\section{Preliminaries}
\label{sec:prelim}
In this section, the definition of the problems, notations, simplifications, and assumptions will be formally introduced.

\subsection{Problem Setting}

In the problems we study, we are given a directed \textit{version graph} $G=(V,E)$, 
where vertices represent \emph{versions} and edges capture the \emph{deltas} between versions.
Every edge $e = (u,v)$ is associated with two weights: 
storage cost $s_e$ and retrieval cost $r_e$.\footnote{If $e=(u,v)$, we may use $s_{u,v}$ 
in place of $s_e$ and $r_{u,v}$ in place of $r_e$.} The cost of storing $e$ is $s_e$, and it takes $r_e$ time to retrieve $v$ once we retrieved $u$. Every vertex $v$ is associated with only the storage cost, $s_v$, of storing (materializing) the version. Since there is usually a smallest unit of cost in the real world, we will assume $s_v,s_e,r_e \in \mathbb{N}$ for all $v\in V,e\in E$.

To retrieve a version $v$ from a materialized version $u$, there must be some path $P=\{(u_{i-1},u_i)\}_{i=1}^n$ with $u_0 = u, u_n = v$, such that all edges along this path are stored. In such cases, we say that $v$ is retrieved from materialized $u$ with retrieval cost $R(v) = \sum_{i=1}^n r_{(u_{i-1},u_i)}$. In the rest of the paper, we say $v$ is ``retrieved from $u$'' if $u$ is in the path to retrieve $v$, and $v$ is ``retrieved from materialized $u$'' if in addition $u$ is materialized. 

The general optimization goal is to select vertices $M \subseteq V$ and edges $F \subseteq E$ of \textit{small} size (w.r.t. storage cost $s$), such that for each $v \in V \setminus M$, there is a \textit{short} path (w.r.t retrieval cost $r$) from a materialized vertex to $v$. 
Formally, we want to minimize (a) total storage cost $\sum_{v\in M} s_v + \sum_{e \in F} s_e$, and (b) total (resp. maximum) retrieval cost $\sum_{v \in V} R(v)$ (resp. $\max_{v \in V} R(v)$). 

Since the storage and retrieval objectives are negatively correlated, 
a natural problem is to constrain one objective and minimize the other. 
With this in mind, four different problems are formulated, as described by Problems 3-6 in \cref{table:problems1-6}. These problems are originally defined in Bhattacherjee et al.~\cite{versioningAmol}, although we use different names for brevity. Since the first two problems are well studied, we do not discuss them further.

\subsection{Further Definitions}\label{subsec:further-assumptions}
We hereby formalize several simplifications and complications, to capture more realistic aspects of the problem. Most of the proposed variants are natural and considered by Bhattacherjee et al.~\cite{versioningAmol}. 

\subparagraph*{Triangle inequality:} It is natural to assume that both weights satisfy triangle inequality, i.e., $r_{u,v} \leq r_{u,w} + r_{w,v}$, since we can always implement the delta $r_{u,v}$ by implementing first $r_{u,w}$ and then $r_{w,v}$.

In fact, a more general triangle inequality should hold when we consider the materialization costs $s_v$, as it's often true that $s_u + s_{u,v} \geq s_v$ for all pairs of $u,v\in V$. 

All hardness results in this paper hold under the generalized triangle inequality. 

\subparagraph*{Directedness:} It is possible that for two versions $u$ and $v$, $r_{u,v} \neq r_{v,u}$. In real world, deletion is also significantly faster and easier to store than addition of content. Therefore, Bhattacherjee et al.~\cite{versioningAmol} considered both directed and undirected cases; we argue that it is usually more natural to model the problems as directed graphs and focus on that case. Note that in the most general directed setting, it is possible that we are given the delta $(u,v)$ but not $(v,u)$. (For our purposes, this is equivalent to having a worse-than-trivial delta, with $s_{v,u} \geq s_u$.)

Directed and undirected cases are considered separately in our hardness results, and all our algorithms apply in the more general directed case. 

\subparagraph*{Single weight function:} This is the special case where the storage cost function $s_e$ and retrieval cost $r_e$ function are identical up to some scaling. This can be seen in the real world, for example, when we use simple \texttt{diff} to produce deltas. We note that the \textit{random compression} construction in our experiments (\cref{sec:experiments}) is designed to simulate two distinct weight functions. 

All our hardness results hold for single weight functions. All our approximation algorithms work even when the two weight functions are very different. 

\subparagraph*{Arborescence and trees:}
An \emph{arborescence}, or a directed spanning tree, is a connected digraph where all vertices except a designated root have in-degree 1, and the root has in-degree 0. If each version is a modification on top of another version, then the ``natural" deltas automatically form an arboreal input instance.\footnote{This does not hold true for version controls because of the \texttt{merge} operation.} 
For practical reasons, we also consider \textit{bidirectional tree} instances, meaning that both $(u,v)$ and $(v,u)$ are available deltas with possibly different weights. Empirical evidence shows that having deltas in both directions can greatly improve the quality of the optimal solution.\footnote{Although not presented in this paper, we noticed that the minimum arborescences on all our experimental datasets tend to have much worse optimal costs, compared to the minimum bidirectional trees.}

\subparagraph*{Bounded treewidth:} At a high level, treewidth measures how similar a graph is to a tree~\cite{TreeWidthDefinition}. As one notable class of graphs with bounded treewidths, series-parallel graphs highly resemble the version graphs we derive from real-world repositories. Therefore, graphs with bounded treewidth is a natural consideration with high practical utility. 
We give precise definitions of this special case in \cref{subsec:DPBoundedTW}.

\medskip
We note that once we have an algorithm for MSR (resp. MMR), we can turn it into an algorithm for BSR (resp. BMR) by binary-searching over the possible values of the constraint. Due to the somewhat exchangeable nature of the storage and constraints in these problems, it's worth considering $(\alpha,\beta)$-bicriteria approximations, where we relax the constraint by some $\beta$ factor in order to achieve a $\alpha$-approximation. 
For example, an algorithm $A$ is $(\alpha, \beta)$-bicriteria approximation for \textsc{MSR} if it outputs a feasible solution with storage cost $\leq \alpha \cdot \mathcal S$ and retrieval cost $\leq \beta \cdot OPT$ where $OPT$ is the retrieval cost of an optimal solution.

\section{Hardness Results}\label{sec:hardness}
We hereby prove the main hardness results of the problems. For completeness, we define the notion of approximation algorithms, as used in this paper, in \cref{appendix:approx-alg-def}. We also include in \cref{appendix:known-hardness-problems} a list of well-studied optimization problems that are used in this section for reduction purposes. 

\subsection{Heuristics can be Arbitrarily Bad}\label{subsec:LMG-Bad}
First, we consider the approximation factor of the best heuristic for \MSR{} in Bhattacherjee et al.~\cite{versioningAmol}, Local Move Greedy (\LMG{}). 
The gist of this algorithm is to start with the arborescence that minimizes the storage cost, and 
iteratively materialize a version that most efficiently reduces retrieval cost per unit storage. In other words, in each step, a version is materialized with maximum $\rho$, where 

$$\rho = \frac{\text{
reduction in total of retrieval costs}}{\text{increase in storage cost}}.$$

We provide the pseudocode for \LMG{} in \cref{alg:LMG}. \pat{We should tidy the notation in the algorithm}

\begin{algorithm}[ht!]\caption{\textsc{Local Move Greedy} (\LMG{})}
    \label{alg:LMG} 
    \SetKwInOut{Input}{Input}\SetKwInOut{Output}{Output}
    \Input{ Version graph $G$, storage constraint $\calS$}
    \tcc{Constructing extended version graph with auxiliary root $v_{aux}$.}
    $V \gets V \cup \{ v_{aux} \}$\;
    \For{$v \in V \setminus \{v_{aux}\}$}{
        $E\gets E \cup \{ (v_{aux}, v) \}$\;
        $r_{(v_{aux}, v)} = 0$\;
        $s_{(v_{aux}, v)} = s_v$\;
    }
    Let $G_{aux} = (V, E)$\;
    \tcc{The main algorithm.}
    $T\gets$ minimum arborescence of $G_{aux}$ rooted at $v_{aux}$ w.r.t. weight function $s$\;
    Let $S(T)$ be the total storage cost of $T$\;
    Let $R(v)$ be the retrieval cost of $v$ in $T$\;
    Let $P(v)$ be the parent of $v$ in $T$\;
    $U\gets V$\;
    \While{$S(T) < \calS$}{
        $(\rho_{max},v_{max})\gets (0,\varnothing)$\;
        \For{$v\in U$ with $S(T)+s_v - s_{P(v),v}\leq \calS$}{
            $T'\gets T\setminus \{(P(v),v)\} \cup \{(v_{aux},v)\}$\;
            $\Delta = \sum_v \big( R(v) - R_{T'}(v) \big)$\;
            \If{$\Delta/(s_v - s_{P(v),v}) > \rho_{max}$}{
                $\rho_{max}\gets \Delta/(s_v - s_{P(v),v})$\;
                $v_{max}\gets v$\;
            }
        }
        $T\gets T\setminus \{(P(v_{max}),v_{max})\} \cup \{(v_{aux},v_{max})\}$\;
        $U\gets U\setminus\{v_{max}\}$\;
        \If{$U = \varnothing$}{
            \Return $T$\;
        }
    }
    \Return $T$\; 
\end{algorithm}


\begin{theorem}\label{thm:LMGBad}
\LMG{} has an arbitrarily bad approximation factor for \textsc{MinSum Retrieval}, even under the following assumptions:
\begin{romanenumerate}
 \item $G$ is a directed path; 
 \item there is a single weight function; and 
 \item triangle inequality holds.
\end{romanenumerate}
\end{theorem}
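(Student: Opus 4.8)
The plan is to exhibit an explicit one-parameter family of directed-path instances $G_W$ (for $W \in \mathbb{N}$) meeting assumptions (i)--(iii), on which \LMG{} returns a solution of total retrieval cost $\Omega(W)$ while the optimum has cost $O(1)$; sending $W \to \infty$ then makes the approximation ratio unbounded. The guiding idea is a knapsack-style trap for density-greedy selection: I plant one high-value ``jackpot'' materialization whose storage increase $\Delta S$ eats essentially the whole budget but whose density $\rho = \Delta R / \Delta S$ is low, alongside a single cheap ``decoy'' materialization of tiny absolute value but slightly higher density. \LMG{} grabs the decoy first; afterwards the jackpot no longer fits the budget, so \LMG{} halts having barely reduced the retrieval cost.

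Concretely, I would use the directed path $v_0 \to v_1 \to u \to d$ with a single weight function ($r_e = s_e$), heavy first edge $s_{(v_0,v_1)} = W$, and materialization costs $s_{v_0} = W$, $s_{v_1} = 2W$. The role of the intermediate vertex $u$, given cheap materialization cost $s_u = 1$ behind an edge $s_{(v_1,u)} = 2$, is to \emph{isolate} the decoy: in the minimum-storage arborescence $u$ is materialized (since $1 < 2$), so the decoy $d$ (reached by the light edge $s_{(u,d)} = 2$, with $s_d = 3$) retrieves from $u$ at cost only $2$ rather than through the heavy edge. I would first check that every edge satisfies the generalized triangle inequality $s_x + s_{(x,y)} \ge s_y$ (it does, with equality on $(v_0,v_1)$ and $(u,d)$), and that there are no non-adjacent deltas, so assumptions (i)--(iii) all hold.

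The analysis is then short. The minimum-storage arborescence $T_0$ materializes $v_0$ and $u$ and routes $v_1, d$ through their light parent edges, so $S(T_0) = 2W + 3$ with initial retrieval $R(v_1) = W$ and $R(d) = 2$. The two available local moves have densities $\rho_{v_1} = W/W = 1$ (jackpot, $\Delta S = 2W - W = W$) and $\rho_{d} = 2/1 = 2$ (decoy, $\Delta S = 1$). Fixing the budget $\calS = S(T_0) + W$, \LMG{} materializes $d$ first (strictly larger density), leaving slack $W - 1 < W$; since no remaining move both fits the budget and lowers the retrieval cost, \LMG{} terminates with retrieval cost $W$. Meanwhile the alternative of materializing $v_1$ keeps storage at exactly $W + 2W + 1 + 2 = 3W+3 = \calS$ and attains retrieval cost $2$, so $\OPT \le 2$ and the ratio is at least $W/2$.

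The main obstacle, and the reason the gadget must be delicate, is that the generalized triangle inequality couples a move's storage increase $\Delta S$ to its retrieval gain $\Delta R$: a materialization that helps many downstream versions by a lot must sit behind expensive edges, which forces its own materialization cost, and hence $\Delta S$, to be large too, so its density cannot be made artificially small. This defeats the naive ``many tiny high-density decoys versus one bulky low-density jackpot'' gadget, since density-greedy would just fill the budget with decoys and do well; the fix is to use a \emph{single} decoy kept cheap by isolating it behind the materialized $u$, with the budget tuned so the jackpot is affordable at the start but not after the unit-cost decoy purchase. I will also make explicit the termination convention, namely that \LMG{} stops once no budget-feasible materialization strictly decreases the total retrieval cost ($\rho_{\max} = 0$ over feasible moves), which is the intended reading of \cref{alg:LMG}.
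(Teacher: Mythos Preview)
Your construction is correct and the analysis goes through. Your approach, however, differs from the paper's in an interesting way. The paper uses a \emph{three}-node path $A\to B\to C$ with a single materialized root $A$ in the minimum arborescence; the decoy move is to materialize the \emph{middle} vertex $B$, whose density is inflated precisely because $C$ still retrieves through $B$ (so materializing $B$ helps two vertices at once), while the jackpot is the endpoint $C$. After \LMG{} grabs $B$, the leftover slack is too small for $C$, and the ratio $(1-\epsilon)c/(1-\epsilon)b=c/b$ is unbounded. By contrast, you use a four-node path and engineer the minimum arborescence to \emph{already} materialize the intermediate vertex $u$, thereby detaching the decoy $d$ from the heavy edge entirely; your decoy and jackpot affect disjoint sets of vertices. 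The paper's gadget is more economical (one fewer vertex, no need for an isolation node or the termination-convention caveat), while yours makes the ``knapsack trap'' structure more transparent, since the two candidate moves are fully decoupled and their densities can be read off independently. Both exploit the same greedy failure mode, just via different mechanics for inflating the decoy's density.
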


\begin{proof}
Consider the following chain of three nodes; the storage costs for nodes and the storage/retrieval costs for edges are labeled in \cref{fig:LMG-counter} Let $a$ be large and $\epsilon = b/c$ be arbitrarily small. To save space, we do not show $v_{aux}$ but only the nodes of the version graph.

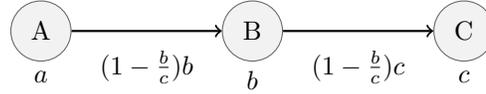
\begin{figure}
    \centering
    \begin{tikzpicture}[
    x = 80, y = 0, 
    roundnode/.style={circle, draw=black!60, fill=black!5, minimum size=8mm}, 
    scale=1]

      \node[roundnode] (A) at (0,0) [label=below:$a$, draw] {A};
      \node[roundnode] (B) at (1,0) [label=below:$b$,draw] {B};
      \node[roundnode] (C) at (2,0) [label=below:$c$,draw] {C};
      \draw [->, thick] (A) -- (B) node [midway, label=below:{$(1-\frac{b}{c})b$}] {};
      \draw [->, thick] (B) -- (C) node [midway, label=below:{$(1-\frac{b}{c})c$}] {};
    \end{tikzpicture}
    \caption{An adversarial example for \LMG{}.}
    \label{fig:LMG-counter}
\end{figure}

It is easy to check that triangle inequality holds on this graph. 

 In the first step of \LMG{}, the minimum storage solution of the graph is $\{A,(A,B),(B,C)\}$ with storage cost $a + (1-\epsilon)b + (1-\epsilon)c$.

Next, in the greedy step, two options are available: 
\begin{bracketenumerate}
\item \label{proof:LMGbad:opt1} Choosing $B$ and delete $(A,B)$: $$\rho_1 = \frac{2(1-\epsilon)b}{\epsilon b} =  \frac{2}{\epsilon}-1.$$
\item \label{proof:LMGbad:opt2} Choosing $C$ and delete $(B,C)$: $$\rho_2 = \frac{(1-\epsilon)b+(1-\epsilon)c}{\epsilon c} = \frac{(1-\epsilon)b}{b}+\frac{1-\epsilon}{\epsilon} = \frac{1}{\epsilon} -\epsilon < \frac{2}{\epsilon}-1.$$
\end{bracketenumerate}
With any storage constraint in range $\big[a+ (1-\epsilon)b + c, a+b+c \big)$, \LMG{} will choose Option~\ref{proof:LMGbad:opt1} which gives a total retrieval cost of $(1-\epsilon)c$. Note that with $\calS < a+b+c$, \LMG{} is not able to pick Option~\ref{proof:LMGbad:opt2} after taking Option~\ref{proof:LMGbad:opt1}. However, by choosing Option~\ref{proof:LMGbad:opt2}, which is also feasible, the total retrieval cost is $(1-\epsilon)b$. The proof is finished by observing that $c/b$ can be arbitrarily large.

\end{proof}

\subsection{Hardness Results on General Graphs} \label{subsec:hardnessResultsGeneral}
Here, we show the various hardness of approximations on general input graphs. We first focus on \textsc{MSR} and \textsc{MMR} where the constraint is on storage cost and the objective is on the retrieval cost. We then shift our attention to \textsc{BMR} and \textsc{BSR} in which the constraint is of retrieval cost and the objective function is on minimizing storage cost.

\subsubsection{Hardness for MSR and MMR}\hfill

\label{subsubsec:MSRMMRhardness}
\begin{theorem}
On version graphs with $n$ nodes, even assuming single weight function and triangle inequality, there is no: 
\begin{romanenumerate}
    \item $(\alpha,\beta)$-approximation for \textsc{MinSum Retrieval} if $\beta\leq \frac{1}{2}(1-\epsilon)\big(\ln n-\ln\alpha-O(1)\big)$; in particular, for some constant $c$, there is no $(c\cdot n)$-approximation without relaxing storage constraint by some $\Omega(\log n)$ factor, unless $\textsc{NP}\subseteq \textsc{DTIME}(n^{O(\log\log n)})$;

    \item $(1+\frac{1}{e}-\epsilon)$-approximation for \textsc{MinSum Retrieval} on undirected graphs for all $\epsilon>0$, unless $\textsc{NP}\subseteq \textsc{DTIME}(n^{O(\log\log n)})$;
    
    \item $\big ( \log^*(n)-\omega(1)\big )$-approximation for \textsc{MinMax Retrieval}, unless $\textsc{NP}\subseteq \textsc{DTIME}(n^{O(\log\log n)})$;

    \item $(2-\epsilon)$-approximation for \textsc{MinMax Retrieval} on undirected graphs for all $\epsilon > 0$, unless $\NP=$P. 
\end{romanenumerate}
\end{theorem}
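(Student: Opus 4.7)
The plan is to prove each of the four inapproximability bounds by a polynomial-time reduction from a well-established hard problem; in every case the reduction builds a version graph whose storage budget encodes the resource constraint of the source problem and whose retrieval cost encodes its covering/distance objective. To preserve the single-weight-function and triangle-inequality assumptions, I would set $s_e = r_e$ on every constructed edge and pad the graph by taking its metric closure, checking that the closure does not create cheaper shortcuts that destroy the gap.

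For part (i), I would reduce from Set Cover, using Feige's $(1-\epsilon)\ln n$ hardness. Given a Set Cover instance with universe $U$ of size $m$ and sets $S_1,\dots,S_\ell$, place one vertex per set and one per element. Through the auxiliary root, each set vertex is available at unit storage cost; each edge $(S_i,e)$ with $e\in S_i$ has storage $0$ and retrieval $1$; direct materialization of elements is expensive. A storage budget corresponding to $k$ sets then forces a $k$-set selection. In the Yes case every element has retrieval cost $1$ (total $m$), while in the No case many elements must be routed through long padded chains to the root; a careful accounting of how relaxing the budget by factor $\alpha$ reduces the number of forced-uncovered elements yields the $\ln\alpha$ savings in $\beta$. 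For part (ii), on undirected graphs I would use the tight $(1-1/e)$ hardness of Max $k$-Coverage with the same gadget. Retrieval is $m$ in the Yes case, while in the No case at least $m/e$ elements remain uncovered and each must traverse at least two edges to reach a materialized set, giving total cost $\geq (1+1/e)m$; the ratio is exactly $(1+1/e-\epsilon)$.

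For part (iii), the $\log^* n$ threshold matches the Chuzhoy--Guha--Khanna--Naor hardness of Asymmetric $k$-Center, which is structurally very close to MMR: pick $k$ vertices in a digraph so as to minimize the maximum directed distance from any vertex to one of them. I would take the Asymmetric $k$-Center instance directly as the version graph, set $s_e = r_e$ equal to each edge's length, give every vertex uniform materialization cost, and set the storage budget to force exactly $k$ materializations; the maximum retrieval cost in the resulting MMR instance then equals the maximum directed distance to a chosen center, so the $\log^* n-\Theta(1)$ gap transfers verbatim. For part (iv), $(2-\epsilon)$-hardness on undirected MMR follows from a standard reduction from Dominating Set, equivalently symmetric $k$-Center with unit weights: given $(G,k)$, treat $G$ as an undirected version graph with unit edge weights and a budget for $k$ materializations; the max retrieval cost is $1$ iff $G$ has a dominating set of size $k$, and at least $2$ otherwise.

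The main obstacle is part (i): I must produce both the $\ln\alpha$ savings in $\beta$ and the leading factor of $\tfrac{1}{2}$. I expect this to require an amplified or layered Set Cover construction in which the penalty on uncovered elements scales cleanly as $\alpha$ grows, together with a two-sided gadget (for instance, duplicating the element side and charging retrieval along both halves) that contributes the extra factor of $\tfrac{1}{2}$. The other technical hurdle is checking that the padding needed to enforce the generalized triangle inequality $s_u + s_{u,v}\geq s_v$ on set and element vertices does not inadvertently open up a cheap route from the auxiliary root to an uncovered element; I would handle this by assigning vertex storage costs that dominate any admissible edge chain in the Yes instance while remaining consistent with the metric closure.
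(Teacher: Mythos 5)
Your reductions for parts (iii) and (iv) coincide with the paper's: it proves both MMR bounds by an approximation-preserving reduction from asymmetric (resp.\ symmetric) $k$-center, setting $s_{u,v}=r_{u,v}=d_{u,v}$, giving every vertex materialization cost $N$, and taking $\calS=kN+n$ so that exactly $k$ versions can be materialized; the max retrieval cost then equals the $k$-center objective. Part (ii) in the paper goes the same way but from symmetric $k$-median (whose $1+\frac1e$ hardness is itself proved via Max $k$-Coverage), so your direct coverage gadget is essentially a re-derivation of the cited result rather than a different argument; note, though, that your gadget as written (edges of storage $0$ and retrieval $1$, expensive element materialization) breaks both the single-weight-function assumption and the generalized triangle inequality $s_u+s_{u,v}\ge s_v$ that you promised to preserve, whereas the paper's metric reduction satisfies both for free.

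The genuine gap is part (i). The bound $\beta\le\frac12(1-\epsilon)(\ln n-\ln\alpha-O(1))$ is not something the paper derives from Set Cover; it is inherited verbatim, via the same strict AP reduction, from the known bicriteria inapproximability of \emph{asymmetric $k$-median} (the Archer manuscript cited in the appendix). You explicitly acknowledge that you do not have the construction producing the $\ln\alpha$ savings or the leading factor of $\frac12$, and ``I expect this to require an amplified or layered Set Cover construction \ldots together with a two-sided gadget'' is a conjecture about how such a proof might go, not a proof. Reproducing that bound from Feige's Set Cover hardness would amount to reproving the asymmetric $k$-median result from scratch, which is a substantial separate piece of work. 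The fix is structural rather than technical: observe that the reduction you already use for part (iii) (distances as deltas, uniform vertex cost $N$, budget $kN+n$) is bicriteria-preserving in both parameters --- relaxing $\calS$ by a factor $\alpha$ corresponds to opening $\approx\alpha k$ centers --- and then invoke the asymmetric $k$-median lower bound directly.
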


\begin{proof}
\textbf{\textsc{MSR.} } There is an approximation-preserving (AP) reduction\footnote{In particular, a strict reduction. See, e.g., Crescenzi's note \cite{AP-Reduction} for more detail.} 
from \textsc{(Asymmetric) k-median} to \MSR{}. Let $s_{u,v}=r_{u,v}=d_{u,v}$, the distance from $u$ to $v$ in a (asymmetric) $k$-median instance. By setting the size of each version $v$ to some large $N$ and storage constraint to be $\calS=kN+n$, we can restrict the instance to materialize at most $k$ nodes and retrieve all other nodes through deltas. For large enough $N$, an $(\alpha,\beta)$-approximation for \MSR{} provides an $(\alpha,\beta)$-approximation for \textsc{(Asymmetric) k-median}, just by outputting the materialized nodes. The desired results follow from known hardness for asymmetric~\cite{unpublishedArcher} or symmetric (see \cref{subsec:hard-problems}) \textsc{k-median}. 

\textbf{\textsc{MMR.}} A similar AP reduction exists from \textsc{(Asymmetric) k-center} to \MMR{}. Again, we can set all materialization costs to $N$ and $c_{u,v}=r_{u,v}=d_{u,v}$, and the desired result follows from the hardness of asymmetric~\cite{k-center-hardness} and symmetric~\cite{SymmetrickCenterHardness} \textsc{k-center}.
\end{proof}

\subsubsection{Hardness for BSR and BMR} \hfill

\begin{theorem}\label{thm:hardness}
On both directed and undirected version graphs with $n$ nodes, even assuming single weight function and triangle inequality, there is no:
\begin{romanenumerate}
    \item $(c_1\ln n)$-approximation for \textsc{BoundedSum Retrieval} for any $c_1<0.5$;
    
    \item $(c_2\ln n)$-approximation for \textsc{BoundedMax Retrieval} for any $c_2 < 1$. 
\end{romanenumerate}
unless $\textsc{NP}=\textsc{P}$. 
\end{theorem}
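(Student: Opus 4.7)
\medskip

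\noindent\textbf{Overall approach.} The plan is to prove both parts by polynomial-time reductions from \textsc{Set Cover}, invoking Dinur--Steurer's $(1-\epsilon)\ln n$ inapproximability. Given a \textsc{Set Cover} instance with universe of size $n$ and $m$ sets, the idea is to build a version graph where materializing a vertex corresponds to picking a set, and where the retrieval constraint forces the materialized vertices to form a cover. Uniformly assigning $s_v = N$ to every vertex for a large parameter $N$, with edges of unit weight $s_e = r_e = 1$, automatically satisfies both the edge triangle inequality $r_{u,v} \le r_{u,w} + r_{w,v}$ (vacuous on sparse gadgets) and the generalized materialization inequality $s_v \le s_u + s_{u,v}$ (becomes $N \le N+1$). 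Thus a total storage cost $(N-1)|C| + \Theta(n)$ is forced, and for $N = \mathrm{poly}(n)$ an $\alpha$-approximation for BSR/BMR translates with $(1+o(1))$-loss to an $\alpha$-approximation for the underlying \textsc{Set Cover}. For the undirected case the gadget is built undirected; for the directed case each undirected edge is replaced by two oppositely oriented copies.

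\medskip

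\noindent\textbf{Part (ii): BMR.} First I would reduce from \textsc{Dominating Set} on an instance graph $G = (V,E)$ (equivalent in approximability to \textsc{Set Cover}). The version graph is $G$ itself with $s_v = N$ for all $v$, $s_e = r_e = 1$ for all edges, and retrieval bound $\mathcal{R} = 1$. A feasible solution must materialize a set $D$ such that every $v \notin D$ has an in-neighbor in $D$ (so that $R(v) \le 1$), i.e., $D$ is a dominating set. The optimal storage equals $(N-1)|D^*| + n$ where $|D^*|$ is the minimum dominating set size, and a $(c_2\ln n)$-approximation for BMR on this graph (which has exactly $n$ vertices) directly yields a $(c_2 \ln n)(1+o(1))$-approximation for \textsc{Dominating Set}. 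Dinur--Steurer's $(1-\epsilon)\ln n$-inapproximability then rules out $c_2 < 1$.

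\medskip

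\noindent\textbf{Part (i): BSR.} A similar reduction works but incurs a factor-of-$2$ loss, which accounts for the weaker $c_1 < 1/2$ bound. The plan is to blow up the instance by replicating each element (or vertex of the \textsc{Dominating Set} graph) into $\Theta(n)$ copies with identical edges, creating a BSR instance on $n' = \Theta(n^2)$ vertices. The sum-retrieval constraint $\sum_v R(v) \le \mathcal{R}$ is then set so that feasibility forces all replicated copies to be at retrieval distance~$1$ from a materialized vertex, which in turn forces the original vertex set to be dominated. A $(c_1 \ln n')$-approximation for BSR on the blown-up graph yields a $(c_1 \cdot 2\ln n)(1+o(1))$-approximation for \textsc{Set Cover}, because $\ln n' = 2 \ln n + o(\ln n)$. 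Dinur--Steurer then forces $2 c_1 \ge 1 - \epsilon$, i.e., $c_1 \ge 1/2 - \epsilon$, giving the claim.

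\medskip

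\noindent\textbf{Main obstacle.} The key technical difficulty is the generalized triangle inequality $s_v \le s_u + s_{u,v}$: without it, one could trivially force matching by making materialization arbitrarily more expensive than edge traversal. Under this inequality, materialization cost and edge cost are tightly coupled, so at any single vertex, ``materialize vs.\ route through a neighbor'' tend to cost the same, and a naive reduction yields a storage cost independent of the cover size. I get around this by using \emph{uniform} vertex weight $N$ across all vertices so the inequality is satisfied with slack~$1$, and by making the structural ``savings'' from materialization manifest globally (through reduced edge count or shared paths) rather than locally. Verifying that this uniform weighting indeed makes the BSR/BMR optimum track the \textsc{Set Cover}/\textsc{Dominating Set} optimum to within a $1+o(1)$ factor, for $N = \mathrm{poly}(n)$, is the core calculation of the proof.
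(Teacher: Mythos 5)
Your part~(ii) is essentially the paper's argument: the paper also reduces from \textsc{Set Cover} with uniform vertex weight $N$, unit symmetric deltas, and $\calR=1$, so that (after a small post-processing step that you avoid by phrasing things as \textsc{Dominating Set}) the materialized vertices form a cover and the storage cost $(N-1)|D|+O(n)$ tracks the cover size up to $1+o(1)$ for $N=\mathrm{poly}(n)$. That half is fine.

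Part~(i) has a genuine gap. Your central claim --- that the sum-retrieval constraint can be ``set so that feasibility forces all replicated copies to be at retrieval distance~$1$ from a materialized vertex'' --- is false: a constraint on $\sum_v R(v)$ never forces a per-vertex bound. A feasible solution can materialize extra vertices (each materialization frees one unit of retrieval budget, since $R(v)$ drops from $\geq 1$ to $0$) and spend that slack on letting other vertices sit at distance $2$ behind a \emph{non-materialized} set vertex. Such a solution need not correspond to a cover by the materialized sets, so your claimed $(1+o(1))$-loss translation to \textsc{Set Cover} does not go through. This slack is exactly where the paper's factor of $2$ comes from: with $\calR=m-m_{\OPT}+n$, the paper bounds the set $H$ of ``hopped'' (used but unmaterialized) sets by $|H|\leq m_{\ALG}-m_{\OPT}$ via an excess-budget accounting, and outputs $\{A_i:a_i\text{ materialized}\}\cup H$, a cover of size at most $2(\alpha+O(1))m_{\OPT}$ --- hence $c_1<0.5$. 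Your proposed source of the $1/2$ (namely $\ln n'=2\ln n$ from a $\Theta(n^2)$-vertex blow-up) is a different mechanism entirely, and as written it rests on the false forcing claim; the blow-up might be salvageable by a careful charging argument showing that only $o(m_{\OPT})$ elements can remain uncovered, but that calculation is the actual content of the proof and is absent. You also omit how $\calR$ is chosen without knowing $m_{\OPT}$ (the paper enumerates all guesses).
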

To prove this theorem, we will present our reduction to these two problems from \textsc{Set Cover}.
We then show their structural properties on \cref{lem:postProcessSolution,lem:postProcessSolution2}.
We finally show the proof at the end of this section.

\subparagraph*{Reduction.} Given a set cover instance with sets $A_1,\ldots,A_m$ and elements $o_1,\ldots,o_n$, we construct the following version graph:
\begin{enumerate}
    \item Build versions $a_i$ corresponding to $A_i$, and $b_j$ corresponding to $o_j$. All versions have size $N$ for some large $N\in\mathbb{N}$. 
    \item For all $i,j\in[m],i\neq j$, create symmetric delta $(a_i,a_j)$ of weight $1$. For each $o_j\in A_i$, create symmetric delta $(a_i,b_j)$ of weight $1$. 
\end{enumerate}


\begin{lemma}[BMR's structure]
\label{lem:postProcessSolution}
Assume we are given an approximate solution to \BMR{} on the above instance under max retrieval constraint $\calR=1$. In polynomial time, we can produce another feasible solution with equal or smaller total storage cost such that only the set versions are materialized. i.e., all $\{b_j\}_{j=1}^n$ are retrieved via deltas.


\end{lemma}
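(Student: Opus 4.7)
The plan is to process the materialized element-vertices $b_j \in M$ one at a time, each time either dropping $b_j$ from $M$ outright (when some $a_i \in M$ already satisfies $o_j \in A_i$) or swapping $b_j$ for a suitable $a_{i^*}$ with $o_j \in A_{i^*}$. Such an $a_{i^*}$ always exists, because the \textsc{Set Cover} instance is assumed feasible and therefore every element lies in at least one set. Before starting, I simplify the given solution so that $F$ contains exactly one incoming edge per non-materialized vertex (extra edges can be discarded without losing feasibility or the $\calR=1$ bound, and only decrease the storage cost); call the resulting parent map $u\colon V \setminus M \to M$. The structural fact driving everything is that $b_j$ has no edges to other $b$-vertices, so any $v$ with $u_v = b_j$ must be some $a_k$ satisfying $o_j \in A_k$.

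Fix a $b_j \in M$. \textbf{Case A:} some $a_i \in M$ has $o_j \in A_i$. Remove $b_j$ from $M$, use $(a_i, b_j)$ to retrieve $b_j$, and reparent every $a_k$ formerly parented at $b_j$ onto $a_i$ via the edge $(a_i, a_k)$, which exists in the version graph whenever $a_k \ne a_i$. The materialization cost drops by $N$ while at most one new edge is added, a strict improvement. \textbf{Case B:} no such $a_i$ is in $M$. Pick any $a_{i^*}$ with $o_j \in A_{i^*}$; it necessarily lies outside $M$. Put $a_{i^*}$ in $M$ and take $b_j$ out (net zero on materialization), delete the single edge that used to parent $a_{i^*}$, add the new edge $(a_{i^*}, b_j)$, and reparent every other $a_k$ formerly parented at $b_j$ onto $a_{i^*}$ via $(a_{i^*}, a_k)$. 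The edge count changes by $-1 + 1 + 0 = 0$.

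The critical thing to verify is that the retrieval bound $\calR = 1$ is preserved. Any $b_{j'} \notin M$ distinct from $b_j$ was parented at an $a$-vertex, hence not at $b_j$, and its parent remains in $M$ after the operation. Any $a_k \notin M$ distinct from $a_{i^*}$ was either parented at $b_j$ (and has been explicitly rerouted) or at an untouched vertex of $M$; in both situations $a_k$ is still one hop from $M$. The delicate point I expect to track most carefully is the Case B corner where $u_{a_{i^*}} = b_j$: the ``parent edge of $a_{i^*}$ to be dropped'' coincides with one of the rerouted edges, so one must avoid double-counting, but the edge balance still comes out to zero and $a_{i^*}$ no longer needs a parent once it is materialized. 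Each iteration strictly decreases $|M \cap \{b_1,\ldots,b_n\}|$ by one and runs in polynomial time, so after at most $n$ iterations we obtain a feasible solution satisfying $\calR = 1$ in which only set-versions are materialized, with total storage cost no larger than the original.
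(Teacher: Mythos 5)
Your proof is correct and follows essentially the same approach as the paper's: a case analysis on each materialized element version $b_j$, either dropping its materialization in favor of an already-materialized covering set or swapping it for a non-materialized covering set, while rerouting any set versions that were retrieved through $b_j$. Your two-case organization (plus the corner case where $a_{i^*}$ was itself parented at $b_j$) covers the same moves as the paper's three cases, and your explicit bookkeeping of the edge counts and the $\calR=1$ bound is, if anything, a bit more complete than the paper's own write-up.
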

\begin{proof}[Proof of \cref{lem:postProcessSolution}]
    \pat{Looks incomplete, double check}
    Suppose some algorithm produces a solution that materializes $b_j$. 
    \subparagraph*{Case~1:}
    If there exists $a_i$ that needs to be retrieved through $b_j$
    (i.e., $o_j\in A_i$), 
    then we can replace the materialization of $b_j$ with that of $a_i$ and replace edges
    of the form $(b_j, a_k)$ with $(a_i, a_k)$.
    It is straightforward to see that neither storage cost nor retrieval cost increased in this process.
    
    \subparagraph*{Case~2:} If no other node is dependent on $b_j$, we can pick any $a_i$ such that $(a_i,b_j)$ exists (again, $o_j \in A_i$). If $a_i$ is already materialized in the original solution, then we can store $(a_i,b_j)$ instead of materializing $b_j$, which decreases storage cost. 
    
    \subparagraph*{Case~3:} If no $a_i$ adjacent to $b_j$ is materialized in the original solution, then some delta $(a_{i'},a_i)$ has to be stored with the former materialized to satisfy the $\calR=1$ constraint. We can hence materialize $a_i$, delete the delta $(a_{i'},a_i)$, and again replace the materialization of $b_j$ with the delta $(a_i,b_j)$ without increasing the storage. \cref{fig:hardness6} illustrates this case. 
\end{proof}

\begin{figure}[ht!]
  \centering
  \includegraphics[width=.7\columnwidth]{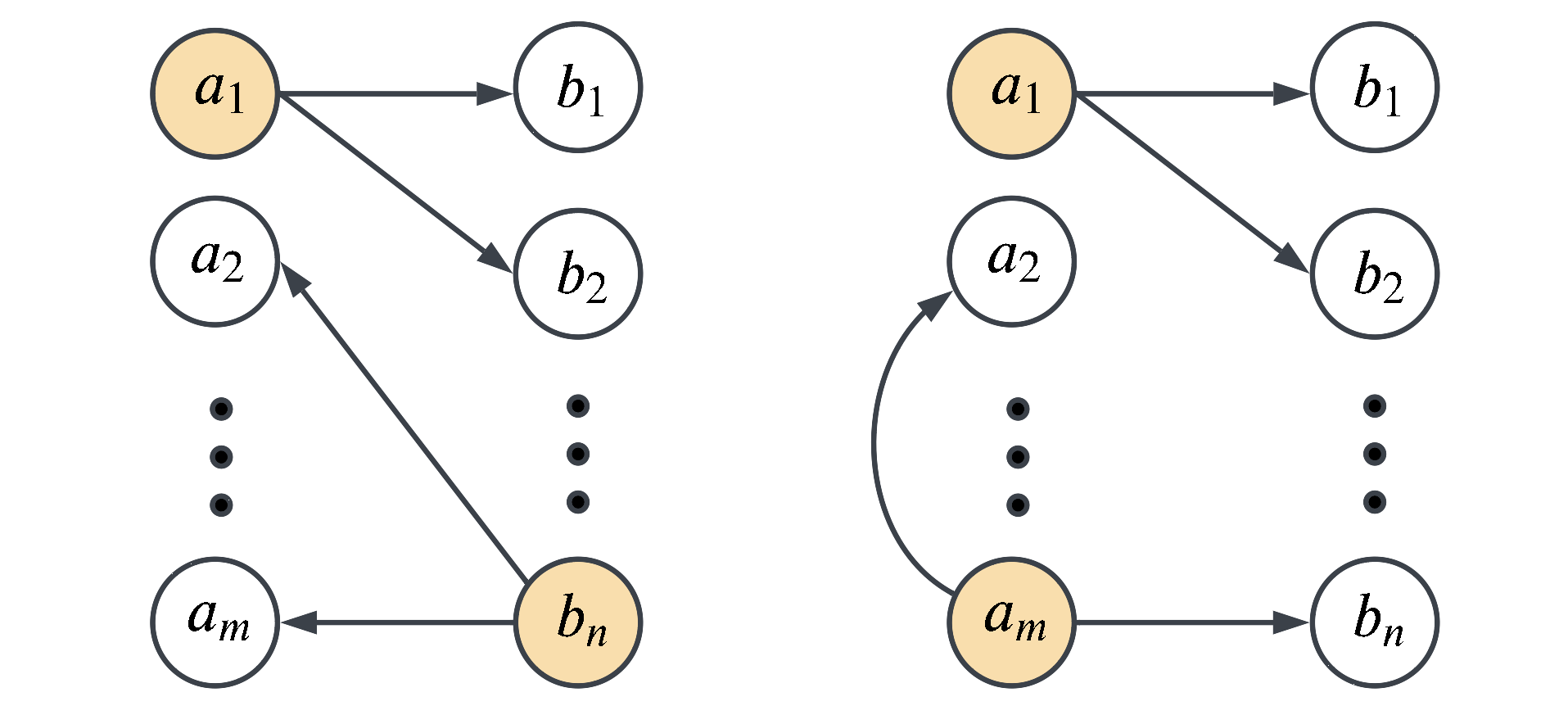}
    \caption{Case 3 in the proof of \cref{lem:postProcessSolution}. The improved solution is on the right. }
     \label{fig:hardness6}
\end{figure}

\begin{lemma}[BSR's structure]
\label{lem:postProcessSolution2}
Assume we are given an approximate solution to \BSR{} on the above version graph under total retrieval constraint $\calR = m-m_{\OPT}+n$, where $m_{\OPT}$ is the size of the optimal set cover. In polynomial time, we can produce another feasible solution to \BSR{} with equal or lower total storage cost, such that only the set versions are materialized. i.e., all $\{b_j\}_{j=1}^n$ are retrieved via deltas.
\end{lemma}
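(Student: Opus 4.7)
My approach parallels the case analysis in the proof of \cref{lem:postProcessSolution}, adapted to the sum retrieval constraint. Given a feasible \BSR{} solution $S$ with retrieval forest $F$, I iteratively remove each materialized $b_j$ by one of three local surgeries, each of which does not increase total storage.

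\emph{Case 1: $b_j$ has a child in $F$.} Its child is some $a_{i^\star}$ with $o_j \in A_{i^\star}$, since $b_j$'s only out-edges in the reduction go to such $a_i$'s. Swap the materialization of $b_j$ onto $a_{i^\star}$, reparent $b_j$'s other children onto $a_{i^\star}$ through the unit-weight edges $(a_{i^\star}, a_{i'})$, and attach $b_j$ as a child of $a_{i^\star}$ via $(a_{i^\star}, b_j)$. Since every version has uniform size $N$ and every edge has weight $1$, storage is preserved; and since each vertex's depth in the new forest is weakly smaller, sum retrieval is non-increasing. \emph{Case 2: $b_j$ is a leaf and some $a_i$ with $o_j \in A_i$ is non-materialized in $F$.} Let $(a_{i'}, a_i)$ be $a_i$'s incoming edge. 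Materialize $a_i$, delete $(a_{i'}, a_i)$, unmaterialize $b_j$, and add $(a_i, b_j)$. The four edits cancel in storage; sum retrieval drops by $d\bigl(1 + |\mathrm{desc}(a_i)|\bigr) - 1 \geq 0$, where $d := R(a_i) \geq 1$. \emph{Case 3: $b_j$ is a leaf and every $a_i$ with $o_j \in A_i$ is already materialized.} Unmaterialize $b_j$ and add $(a_i, b_j)$ for any such $a_i$. Storage drops by $N-1$; sum retrieval rises by exactly $1$.

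The main obstacle is verifying that the final solution $S'$ stays feasible, i.e., has sum retrieval at most $\calR$. Cases 1 and 2 are non-increasing in retrieval, while Case 3 adds $+1$ each time it fires, so $\sum_v R_{S'}(v) \leq \sum_v R_S(v) + t$ where $t$ is the number of Case 3 surgeries. I would bound $t \leq \calR - \sum_v R_S(v)$ via a charging argument that uses the specific choice $\calR = m - m_{\OPT} + n$: each Case 3 surgery witnesses a $b_j$ materialized redundantly on top of already-materialized sets covering $o_j$, and combined with the structural lower bound $\sum_v R_S(v) \geq m + n - k$ (each of the $m+n-k$ unmaterialized vertices contributes at least $1$) and the feasibility-forced inequality $k \geq m_{\OPT}$, one obtains slack $\calR - \sum_v R_S(v) \geq k - m_{\OPT} \geq t$. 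The whole transformation runs in polynomial time with at most $n$ constant-time surgeries.
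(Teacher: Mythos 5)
Your Cases 1 and 2 are sound and correspond to the paper's Cases 1 and 3 (your Case 2 is in fact slightly cleaner, since it does not require the parent of $a_i$ to be materialized). The gap is exactly where you flagged it: the feasibility accounting for Case 3. First, your inequality is derived in the wrong direction: from $\sum_v R_S(v)\ge m+n-k$ and $\calR=m-m_{\OPT}+n$ one gets $\calR-\sum_v R_S(v)\le k-m_{\OPT}$, an \emph{upper} bound on the slack, not the lower bound you need. Second, the claim $t\le \calR-\sum_v R_S(v)$ is genuinely false. Take $A_1=\{o_1,\dots,o_n\}$ and $A_2=A_3=\{o_2\}$, so $m_{\OPT}=1$ and $\calR=n+2$, and consider the feasible solution that materializes $a_1$ and $b_1$, retrieves $a_2$ from $a_1$, retrieves $a_3$ from $a_2$ (so $R(a_3)=2$), and retrieves $b_2,\dots,b_n$ from $a_1$. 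Its total retrieval is exactly $n+2=\calR$, so the slack is $0$; yet $b_1$ is materialized, is a leaf, and its unique covering set $a_1$ is already materialized, so your Case 3 must fire once and the output exceeds $\calR$.

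The missing idea is a local repair that manufactures slack before Case 3 pays its $+1$. The paper observes that whenever the constraint is tight and such a $b_j$ exists, some version $w$ must have retrieval cost at least $2$: otherwise every unmaterialized node has retrieval cost $1$, tightness forces exactly $m_{\OPT}$ materialized nodes, and then the materialized sets together with one covering set for each \emph{other} materialized element would form a set cover of size at most $m_{\OPT}-1$ (using that a set covering $o_j$ is already materialized), a contradiction. If $w$ is a set version, reroute it through a unit-cost delta from a materialized set (storage unchanged, retrieval drops by at least $1$); if $w$ is an element version, materialize a covering set (storage $+(N-1)$, exactly offset by the $-(N-1)$ from dematerializing $b_j$, and retrieval drops by at least $2$). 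Either repair restores at least one unit of slack, after which your Case 3 surgery is safe. Without this step, or an equivalent one, the procedure does not return a feasible solution.
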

\begin{proof}[Proof of Lemma~\ref{lem:postProcessSolution2}]

We refer to the same three cases as in \cref{lem:postProcessSolution}. Similarly, if we have a solution where some $b_j$ is materialized, 

\subparagraph*{Case 1:} if some $a_i$ is retrieved through $b_j$, we can apply the same modification as \cref{lem:postProcessSolution}. We can replace the materialization of $b_j$ with that of $a_i$, and replace edges of the form $(b_j,a_k)$ with $(a_i,a_k)$. Neither the storage nor the retrieval cost increases in this case. 

Now, WLOG we assume no deltas $(b_j,a_i)$ are chosen. 

\subparagraph*{Case 2:} if no $a_i$ is retrieved through $b_j$, and some $a_i$ adjacent to $b_j$ is materialized, then method in \cref{lem:postProcessSolution} needs to be modified a bit in order to remove the materialization of $b_j$. If we simply retrieve $b_j$ via the delta $(a_i,b_j)$, we would lower the storage cot by $N-1$ and \textit{increase} the total retrieval cost by $1$. This renders the solution infeasible if the total retrieval constraint is already tight. 

To tackle this, we analyze the properties of the solutions  with total retrieval cost exactly $\calR$. Observe that all solutions must materialize at least $m_{\OPT}$ nodes at all time, so a configuration exhausting the constraint $R$ must have some version $w$ with retrieval cost at least $2$. If this $w$ is a set version, we can loosen the retrieval constraint by storing a delta of cost 1 from some materialized set instead. If $w$ is an element version, then we can materialize its parent version (a set covering it), which increases storage cost by $N-1$ and decreases total retrieval cost by at least 2. 

Either case, by performing the above action if necessary, we can resolve case 2 and obtain a approximate solution that is not worse than before. 

\subparagraph*{Case 3:} this is where each $a_i$ adjacent to $b_j$ neither retrieves through $b_j$ nor is materialized. Fix an $a_i$, then some delta $(a_{i'},a_i)$ has to be stored to retrieve $a_i$; WLOG we can assume that the former is materialized. We can thus materialize $a_i$, delete the delta $(a_{i'},a_i)$, and again replace the materialization of $b_j$ with the delta $(a_i,b_j)$ with no increase in either costs. 
\end{proof}

Equipped with \cref{lem:postProcessSolution} and \cref{lem:postProcessSolution2}, we are now ready to prove \cref{thm:hardness}.

\begin{proof}[Proof of \cref{thm:hardness}]
    Assuming $m=O(n)$ in the set cover instance, we present an AP reduction from \textsc{Set Cover} to both \BMR{} and \BSR{}. 
    
    \subparagraph*{\textsc{BMR}.} To produce a set cover solution, we take an improved approximate solution for \BMR{}, and output the family of sets whose corresponding versions are materialized. Since none of the $b_j$'s is stored, they have to be retrieved from some $a_i$. Moreover, under the constraint $\calR = 1$, they have to be a 1-hop neighbor of some $a_i$, meaning the materialized $a_i$ covers all of the elements in the set cover instance. 
    
    Finally, we prove that the approximation factor is preserved: for large $N$, the improved solution has objective value $\approx N\abs{\{ i: a_i \text{ materialized} \} }.$ If $n= O(m)$, then an $\alpha(\abs{V})$-approximation for \MMR{} provides a $(\alpha(n)+O(1))$-approximation for set cover. Hence we can not have $\alpha(\abs{V}) = c \ln n$ for $c<1$ unless $\NP\subseteq \DTIME(n^{O(\log\log n)})$~\cite{setCoverFeige1998}.

\begin{figure}[h!]
  \centering
  \includegraphics[width=.7\columnwidth]{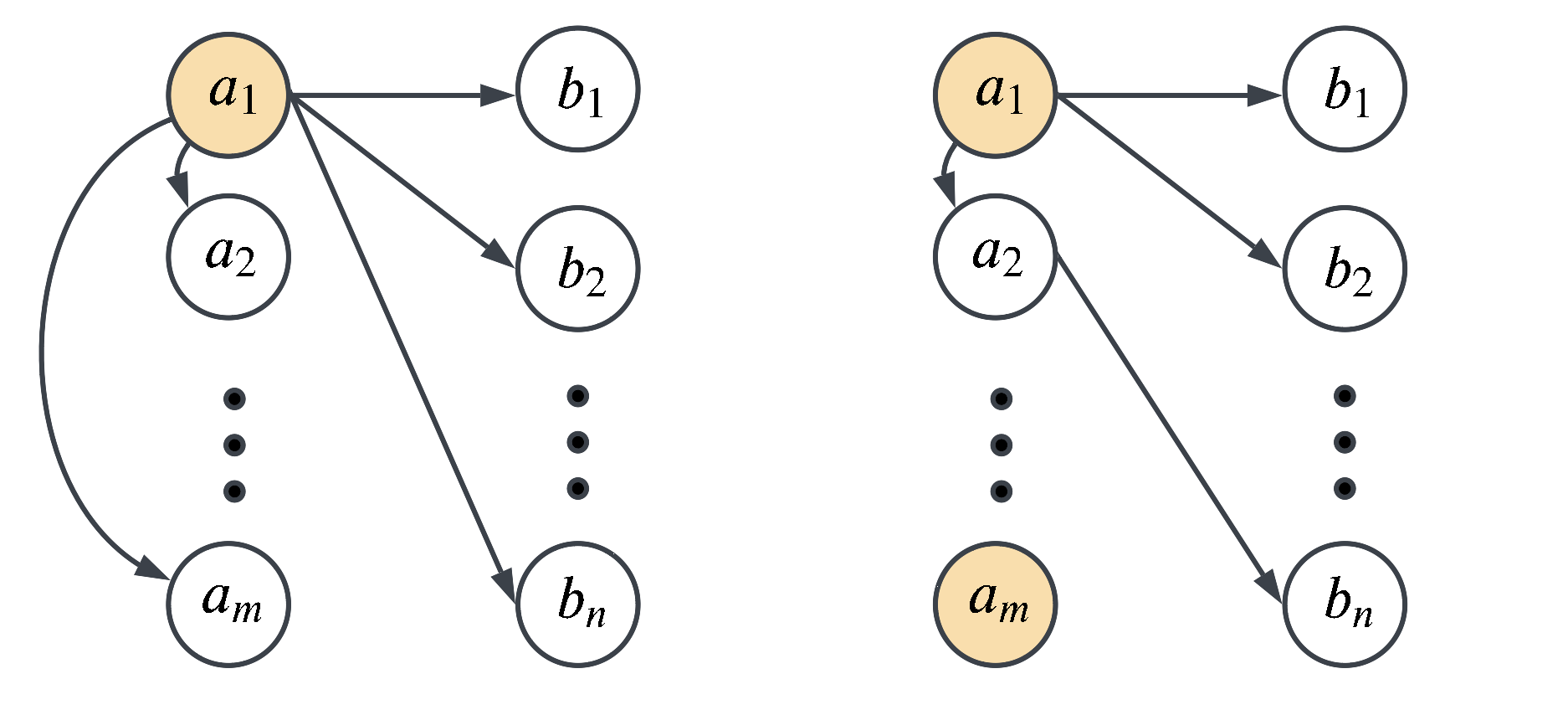}
    \caption{The \BSR{} case in proof of \cref{thm:hardness}. The solution on the right has one version ($b_2$) of retrieval cost 2, hence it must materialize an additional version $a_m$ to satisfy the total retrieval constraint. }
     \label{fig:hardness7}
\end{figure}

\subparagraph*{\textsc{BSR}.}
Assume for the moment that we know $m_{\OPT}$, then we can set total retrieval constraint to be $\calR=m-m_{\OPT}+n$, and work with an improved approximate solution. This choice of $\calR$ is made so that an optimal solution must materialize the set versions corresponding to a minimum set cover. All other nodes must be retrieved via a single hop. 

By \cref{lem:postProcessSolution2}, we assume all element versions are retrieved from a (not necessarily materialized) set version that covers it. If $m = O(n)$, an $\alpha(\abs{V})$-approximation of \BMR{} materializes $m_{\ALG}\leq (\alpha(n) + O(1))m_{\OPT}$ nodes. 

Note that, by materializing additional nodes, we are allowing a set $B$ of $b_j$'s to have retrieval cost $\geq 2$. Let $H$ denote the set of ``hopped sets'' $A_i$, which are not materialized yet are necessary to retrieve some $b_j$ through the delta $(a_i,b_j)$. By analyzing the total retrieval cost, we can bound $\abs{H}$ by:
$$\abs{H} \leq \abs{B} \leq m_{\ALG}-m_{\OPT}$$

Specifically, each additional $b_j\in B$ increases retrieval cost by at least $1$ compared to the optimal configuration; yet each of the $m_{\ALG}-m_{\OPT}$ additionally materialized set versions only decreases total retrieval cost by 1. It follows that the family of sets
$$S = \{A_i: a_i\text{ materialized }\}\cup H$$

is a $\Big (2\alpha(n) - O(1)\Big )$-approximation solution for the corresponding \textsc{Set Cover} instance. $S$ is feasible because all of the $b_j$'s are retrieved through some $(a_i,b_j)$, where $A_i\in S$; on the other hand, the size of both sets on the right hand side are at most $(\alpha(n)+O(1)) m_{\OPT}$, hence the approximation factor holds. 
Thus, any $\alpha(\abs{V})=c\ln n$ for any $c<0.5$ will result in a \textsc{Set Cover} approximation factor of $2c \cdot \ln(n)$. 

\subparagraph*{} We finish the proof by noting that, without knowing $m_{\OPT}$ in advance, we can run the above procedure for each possible guess of the value $m_{\OPT}$, and obtaining a feasible set cover each iteration. The desired approximation factor is still preserved by outputting the minimum set cover solution over the guesses. 
\end{proof} 

\bob{Not adding stuffs related to non-uniform demand.}

\subsection{Hardness on Arborescences}

We show that MSR and BSR are NP-hard on arborescence instances. This essentially shows that our FPTAS algorithm for MSR in \cref{subsec:tree-dp} is the best we can do in polynomial time. 
\begin{theorem}\label{thm:hardness-arborescence}
On arborescence inputs, \textsc{MinSum Retrieval} and \textsc{BoundedSum Retrieval} are \textsc{NP}-hard even when we assume single weight function and triangle inequality. 
\end{theorem}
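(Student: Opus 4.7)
The plan is to give a polynomial-time reduction from the classical (weakly) NP-hard \textsc{Knapsack} problem to both \MSR{} and \BSR{} on arborescence inputs, producing instances that automatically satisfy the single weight function and (generalized) triangle inequality conditions.

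Given a \textsc{Knapsack} instance with integer item weights $w_1,\ldots,w_n$, values $v_1,\ldots,v_n$, budget $B$, and target value $V$, the reduction builds a star arborescence with root $r$ and leaves $u_1,\ldots,u_n$. To each edge $(r,u_i)$ I would assign a common weight $v_i$, setting $s_{(r,u_i)} = r_{(r,u_i)} = v_i$ (single weight function with scaling $1$); the leaf materialization cost is $s_{u_i} = w_i + v_i$, and the root materialization cost is $s_r = \max_j w_j$. The single weight condition is immediate, the retrieval triangle inequality is vacuous in a star, and the only nontrivial storage triangle inequality $s_r + s_{(r,u_i)} \geq s_{u_i}$ reduces to $\max_j w_j \geq w_i$, which holds.

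In any feasible arborescence solution the root must be materialized (it has no incoming edges, so there is no other way to retrieve it); each leaf $u_i$ then faces two independent options: materialize, costing $w_i + v_i$ extra storage and $0$ retrieval, or keep its delta, costing $v_i$ storage and $v_i$ retrieval. Relative to the all-delta baseline (storage $s_r + \sum_i v_i$, retrieval $\sum_i v_i$), materializing leaf $u_i$ adds exactly $w_i$ storage and saves exactly $v_i$ retrieval, mirroring \textsc{Knapsack}. Setting $\calS = s_r + \sum_i v_i + B$, the \MSR{} decision question ``retrieval $\leq \sum_i v_i - V$?'' is then equivalent to the \textsc{Knapsack} decision question ``does some subset have total weight $\leq B$ and total value $\geq V$?''. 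For \BSR{}, the symmetric choice $\calR = \sum_i v_i - V$ turns the minimum-storage objective into the minimum-weight-covering-value dual of \textsc{Knapsack}, which is NP-hard by a standard argument.

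The main obstacle, if any, is bookkeeping: verifying that any \MSR{}/\BSR{} feasible solution on this star is indeed forced into the ``root materialized $+$ per-leaf binary choice'' form, and then tracking the constants so that the equivalence is exact. The star structure rules out every other subtlety (no directed paths of length $>1$ exist, so no nontrivial delta-chaining is ever advantageous), and the numbers in the constructed instance are polynomial in the \textsc{Knapsack} input, so the reduction is genuinely polynomial-time in the standard binary encoding. Hence no real technical difficulty arises beyond assembling the pieces above.
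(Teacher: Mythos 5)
Your reduction is correct and is essentially the same as the paper's: the paper also builds a depth-one star arborescence in which the root is forced to be materialized and each leaf independently trades extra storage for saved retrieval, only it reduces from \textsc{Subset Sum} (unit edge costs, leaf materialization cost $a_i+1$, $\calS = N+n+T$) rather than from \textsc{Knapsack}. The only other difference is that the paper obtains the \BSR{} half via a general binary-search lemma reducing \MSR{} to \BSR{}, whereas you argue \BSR{} directly through the dual covering form of \textsc{Knapsack}; both are fine.
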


In order to prove the theorem above, we rely on the following reduction which connects two problems together.

\begin{lemma}\label{lem:prob5SolvesProb3}
If there exists poly-time algorithm $\mathcal{A}$ that solves \textsc{BoundedSum Retrieval} (resp. \textsc{BoundedMax Retrieval}) on some set of input instances, then there exists a poly-time algorithm solving \textsc{MinSum Retrieval} (resp. \textsc{MinMax Retrieval}) on the same set of input instances. 
\end{lemma}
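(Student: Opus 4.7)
The approach is a binary search reduction from $\MSR$ (resp. $\MMR$) to $\BSR$ (resp. $\BMR$). Given an $\MSR$ instance $(G,\calS)$, I would define $S^*(\calR)$ to be the minimum total storage cost over all feasible configurations whose total retrieval cost is at most $\calR$; this value, together with an achieving configuration, is exactly what $\mathcal{A}$ returns on the instance $(G,\calR)$. Since $S^*$ is non-increasing in $\calR$, the optimal $\MSR$ retrieval value $R^*$ equals the smallest $\calR$ for which $S^*(\calR)\le\calS$. The $\MMR$-to-$\BMR$ reduction is identical, with total retrieval cost replaced by maximum retrieval cost.

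First, I would binary search over integer values of $\calR$ in a range $[0,N]$, invoking $\mathcal{A}$ once per iteration and keeping track of the best feasible configuration returned. Using the integrality of weights from \cref{sec:prelim} and the fact that any retrieval path uses at most $n-1$ edges, we can take $N = n(n-1)\cdot \max_e r_e$ in the $\MSR$ case and $N=(n-1)\cdot\max_e r_e$ in the $\MMR$ case. Hence the search terminates in $O(\log n + \log \max_e r_e)$ calls to $\mathcal{A}$, which is polynomial in the input size. The configuration returned by $\mathcal{A}$ at the smallest valid $\calR$ automatically satisfies the storage bound $\calS$, and its actual retrieval cost cannot be strictly less than this $\calR$ (else $\calR$ would not be minimal), so this configuration is an optimal $\MSR$ (or $\MMR$) solution.

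Correctness of the search rests on two observations: (a) $S^*(R^*)\le\calS$, witnessed by any optimal $\MSR$ solution viewed as a $\BSR$ feasible point; and (b) $S^*(\calR')>\calS$ for every integer $\calR'<R^*$, for otherwise the configuration certifying $S^*(\calR')\le\calS$ would be $\MSR$-feasible with retrieval strictly below $R^*$. Since the reduction only changes the scalar budget and leaves $G$ untouched, the input class on which $\mathcal{A}$ is valid (arborescence, bounded treewidth, bidirectional tree, etc.) is preserved verbatim, matching the ``same set of input instances'' wording. The only potential obstacle is ensuring the binary search range is polynomially bounded; this is immediate from the integer-weight assumption, so no deeper technical difficulty arises.
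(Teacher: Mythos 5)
Your proposal is correct and follows essentially the same route as the paper: binary search over the retrieval budget $\calR$, using $\mathcal{A}$ as an oracle, with the search space bounded by $n^2 r_{\max}$ (resp.\ $n r_{\max}$) thanks to integer weights. The only addition beyond the paper's argument is your explicit verification that the configuration returned at the minimal feasible $\calR$ is itself an optimal solution, which is a worthwhile but minor elaboration.
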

\begin{proof}
Suppose we want to solve a \MSR{} (resp. \MMR{}) instance with storage constraint $\calS$. We can use $\calA$ as a subroutine and conduct binary search for the minimum retrieval constraint $\calR^*$ under which \BSR{} (resp. \BMR{}) has optimal objective at most $\calS$. Thus, $\calR^*$ is an optimal solution for our problem at hand. 

To see that the binary search takes poly$(n)$ steps, we note that the search space for the target retrieval constraint is bounded by $n^2r_{max}$ for \BSR{} and $nr_{max}$ for \BMR{}, where $r_{max}=\max_{e\in E}r_e$. 
\end{proof}

Now we show the proof for \cref{thm:hardness-arborescence}.
\begin{proof}[Proof of~\cref{thm:hardness-arborescence}]
Assuming \cref{lem:prob5SolvesProb3}, it suffices to show the \textsc{NP}-hardness of \MSR{} on these inputs. 

Consider an instance of \textsc{Subset Sum} problem with values $a_1,\ldots,a_n$ and target $T$. This problem can be reduced to MSR on an $n$-ary arborescence of depth one. Let the root version be $v_0$ and its children $v_1,\ldots,v_n$. The materialization cost of $v_i$ is set to be $a_i+1$ for $i\in [n]$, while that of $v_0$ is some $N$ large enough so that the generalized triangle inequality holds. For each $i\in[n]$, we can set both retrieval and storage costs of edge $(v_0,v_i)$ to be $1$. 

Consider MSR on this graph with storage constraint $\calS=N+n+T$. From an optimal solution, we can construct set $A=\{i\in[n]: v_i\text{ materialized}\}$, an optimal solution for the above \textsc{Subset Sum} instance. 
\end{proof}

\section{Exact Algorithm for MMR and BMR on Bidirectional Trees}\label{Sec:arbor}
 \label{sec:MMRBMR}
As discussed in \cref{sec:intro}, we can use an algorithm for BMR to solve for MMR via binary search. Hence, it suffices to focus on \BMR{}, namely, when we are given maximal retrieval constraint $\calR$ and want to minimize storage cost. 

\setlength{\textfloatsep}{10pt}
\begin{algorithm}
\caption{\label{alg:DPforProb6} \textsf{DP-BMR}}
    \SetAlgoVlined
    \SetKwInOut{Input}{Input}
    \SetKwInOut{Output}{Output}
    \Input{Tree $T$ and the max retrieval constraint $\calR$}
    Orient $T$ arbitrarily. Sort $V$ in reverse topological order\;
    $\DP[v][u]\gets \infty$ for all $v,u\in V$\;
    \For{$v$ in $V$}{
        \For{$u$ in $V$ such that $R(u,v)\leq\calR$}{
            \eIf{$u = v$}{
                $\DP[v][u]\gets s_{v}$\;
            }{
                $\DP[v][u]\gets s_{p_v^u,v}$, where $p_v^u$ is the node preceding $v$ on the path from $u$ to $v$\;
            }
            \For{$w$ that is a child of $v$}{
                \eIf{$w$ in the path from $u$ to $v$}{
                $\DP[v][u]\gets \DP[v][u] + \DP[w][u]$\;
                }{
                $\DP[v][u] \gets \DP[v][u] + \min\{\OPT[w], \DP[w][u]\}$\;
                }
            }
            $\OPT[v] \gets \min\{\DP[v][w]: w\in V(T_{[v]})\}$\;
        }
    }
    \Return $\OPT[v_{root}]$\;
\end{algorithm}

Let $T=(V, E)$ be a bidirectional tree. This is a digraph with two directed edges $(u,v),(v,u) \in E$ corresponding to each edge $\{u,v\}\in E_0$, on some underlying undirected tree $(V,E_0)$. Let $\calR$ be the maximum retrieval cost constraint. 
We can pick any vertex $v_{0}$ as root, and orient the tree such that $v_0$ has no parent, while all other nodes have exactly one parent.

For each $v \in V$, let $T_{[v]}$ denote the subtree of $T$ rooted at $v$. If $v$ is retrieved from materialized $u$,
we use $p^u_v$ to denote the parent of $v$ on the unique $u-v$ path to retrieve $v$. 
We write $p^v_v = v$. 
We now describe a dynamic programming (DP) algorithm \textsf{DP-BMR} that solves \BMR{} exactly on $T$.

\subparagraph*{DP variables.} For $u,v\in V$, let $\DP[v][u]$ be the minimum storage cost of a \textit{partial solution} on $T_{[v]}$, which satisfies the following: all descendants of $v$ are retrieved from some node in $T_{[v]}$, while $v$ is retrieved from some materialized version $u$, which is \emph{potentially outside the subtree $T_{[v]}$}. See \cref{fig:DP4} for an illustration. 

Importantly, when calculating the storage cost for $DP[v][u]$, if $u$ is not a part of $T_{[v]}$, the incident edge $(p^u_v,v)$ is involved in the calculation, while other edges in the $u-v$ path, or the cost to materialize $u$, are not involved. 

\subparagraph*{Base case.} We iterate from the leaves up. Let $R(u,v)$ denote the retrieval cost of the $u-v$ path. For a leaf $v$, we set $DP[v][v] = s_{v}$, and $DP[v][u] = s_{(p^u_v,v)}$ for all $u\neq v$ with $R(u,v)\leq \calR$. Here, $p^u_v$ is just the parent of $v$ in the tree structure. All choices of $u,v$ such that $R(u,v) > \calR$ are infeasible, and we therefore set $\DP[v][u]=\infty$ in these cases. 

\subparagraph*{Recurrence.} 
For convenience, we define helper variable 
$OPT[v]$ to be the minimum storage cost on the subproblem $T_{[v]}$, such that \emph{$v$ is either materialized or retrieved from one of its materialized descendants}.\footnote{In other words, the case where $v$ is retrieved from $u$ outside of $T_{[v]}$, or case 3 in \cref{fig:DP4}, is not considered in this helper variable.} Formally, 

$$\OPT[v] = \min\{\DP[v][w]: w\in V(T_{[v]})\}.$$

For recurrence on $\DP[v][u]$ such that $R(u,v) \leq \calR$, there are three possible cases of the relationship between $v$ and $u$ (see \cref{fig:DP4}). In each case, we outline what we add to $\DP[v][u]$.

\begin{figure}\centering
\includegraphics[width=0.8\linewidth]{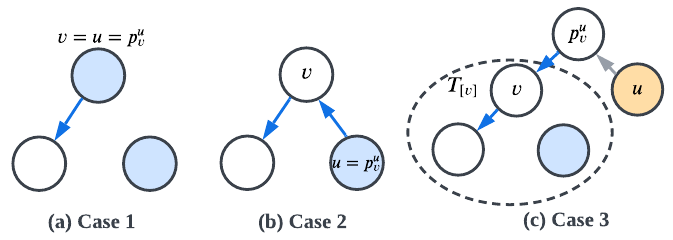}
\caption{The 3 cases of \textsf{DP-BMR}, where $u=v,u\in V(T_{[v]}),$ and $u\not\in V(T_{[v]})$ respectively. The blue nodes and edges are stored in the partial solution. }
\label{fig:DP4}
\end{figure}

\begin{bracketenumerate}
\item If $u = v$, we materialize $v$. Each child $w$ of $v$ can be either materialized, retrieved from their materialized descendants, or retrieved from the materialized $v$. Note that the storage cost on $T_{[w]}$ is exactly $\min\{\OPT[w],\DP[w][v]\}$, which we will add to the total value of $\DP[v][v]$. 

\item  If $u \in V(T_{[v]})\backslash \{v\}$, we would store the edge $(p^u_v,v)$. Note that $p^u_v$ is a child of $v$ and hence is also retrieved from the materialized $u$, so we must add $\DP[p^u_v][u]$. We then add $\min\{\OPT[w],\DP[w][u]\}$ for all other children $w$ of $v$. 

\item If $u\not\in V(T_{[v]})$, we would store the edge $(p^u_v,v)$, where $p^u_v$ is now the parent of $v$ in the tree structure. We then add $\min\{\OPT[w],\DP[w][u]\}$ for all children as before. 
\end{bracketenumerate}

\subparagraph*{Output.} We output $\OPT[v_{root}]$, the storage cost of the optimal solution. To output the configuration achieving this optimum, we can use the standard procedure where we store the configuration in each $\DP$ variable. 

\begin{theorem}
    \textsc{BoundedMax Retrieval} is solvable on bidirectional tree instances in $O(n^2)$ time. 
\end{theorem}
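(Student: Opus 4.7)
The plan is to establish that Algorithm~\ref{alg:DPforProb6} correctly computes the optimal \BMR{} solution on a bidirectional tree, then verify the $O(n^2)$ time bound. Correctness will follow from induction on the tree in reverse topological order: I would show that $\DP[v][u]$ equals the minimum storage cost of a partial solution on $T_{[v]}$ in which $v$ is retrieved from a (potentially external) materialized version $u$ under the constraint $R(u,v)\leq\calR$, and that the helper $\OPT[v]$ equals the minimum storage cost over all valid configurations on $T_{[v]}$ in which $v$ is retrieved from some node inside $T_{[v]}$.

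For the inductive step I would argue case by case that the recurrence captures every feasible subtree configuration. In case (i), $u=v$: $v$ is materialized, and each child $w$ is either solved independently (contributing $\OPT[w]$) or routed through $v$ (contributing $\DP[w][v]$). In case (ii), $u\in V(T_{[v]})\setminus\{v\}$: the unique child $p_v^u$ on the $u$-$v$ path must itself be retrieved from $u$, contributing exactly $\DP[p_v^u][u]$, while each remaining child is independent of $u$. In case (iii), $u\notin V(T_{[v]})$: $v$ connects to $u$ via the edge from its structural parent, and each child $w$ may either be independent or extend the route through $v$ to reach $u$. In all three cases, feasibility of the retrieval constraint is enforced uniformly by requiring $R(u,v)\leq\calR$ before setting $\DP[v][u]$ to a finite value, and for children the constraint is automatically tested inside $\DP[w][u]$ because $R(u,w)=R(u,v)+r_{v,w}$ when routed through $v$.

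The key subtlety is the independence of the children's subproblems once the retrieval source for $v$ is fixed. Because $T_{[v]}\setminus\{v\}$ decomposes into the subtrees of the children and the only edge incident to $v$ committed by the partial solution is $(p_v^u,v)$, each child $w$ may independently be optimized by $\min\{\OPT[w],\DP[w][u]\}$ in cases (i) and (iii), while in case (ii) the routing child is forced to take $\DP[p_v^u][u]$. I would also verify that $\OPT[v_{root}]$ covers every feasible global solution, since in any feasible instance $v_{root}$ is retrieved from some materialized vertex of the whole tree, which by definition lies in $V(T_{[v_{root}]})=V$.

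For the running time, there are $O(n^2)$ DP entries, and computing $\DP[v][u]$ scans the children of $v$, so the total work is $\sum_{u\in V}\sum_{v\in V}\deg(v)=O(n^2)$. A one-time $O(n^2)$ preprocessing via BFS or DFS from every vertex yields all values $R(u,v)$ and $p_v^u$ that the recurrence needs in $O(1)$ lookups. The main obstacle I anticipate is the bookkeeping in case (iii): I need to confirm that children opting for $\DP[w][u]$ really can share the single stored edge $(p_v^u,v)$ without double-counting it, and that no edge incident to $v$ or to its children is silently dropped when a child instead chooses $\OPT[w]$.
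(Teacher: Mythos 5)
Your proposal is correct and follows essentially the same route as the paper's proof: induction from the leaves up over the three cases ($u=v$, $u\in V(T_{[v]})\setminus\{v\}$, $u\notin V(T_{[v]})$), independence of the children's subproblems, and the running-time bound $\sum_u\sum_v \deg(v)=O(n^2)$. The double-counting worry you flag in case (iii) is resolved directly by the definition of $\DP[w][u]$, which charges only the edge $(v,w)$ incident to $w$ and not the edge $(p^u_v,v)$, so the two entries account for disjoint sets of stored edges.
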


\begin{proof}

The time complexity follows from the observation that each calculation of $\DP[v][u]$ in the recurrence takes $O(\deg(v))$ time, and $\sum_u\sum_v \deg(v) = \sum_u O(n) = O(n^2)$. 

\bob{Rewritten this part.}
The optimality of this DP can be shown inductively from leaves up. On each leaf $v$, the optimal storage costs of the trivial subproblems are indeed $\DP[v][v] = s_v$, and $\DP[v][u] = s_{(p^u_v, v)}$ for all $u \neq v$ such that $R(u,v) \leq \calR$. 

Inductively, suppose node $v$ has children $w_1,\ldots, w_k$ on which the DP values are correctly calculated. To calculate the optimal storage cost $\DP[v][u]$ where $R(u,v) \leq \calR$, we consider $\DP[v][u]$ as the sum of the following three items:
\begin{bracketenumerate}
    \item The storage cost $s_{(p^u_v, v)}$, or $s_v$ if $u=v$. We add this cost directly since it is not part of the DP values of any child of $v$.
    \item The value $\min\{\DP[w_i][u], \OPT[u]\}$ for all child $w_i$ such that $u \not \in V(T_{[w_i]})$. This is becuase the minimum storage cost on subproblem $T_{[w_i]}$ is exactly $\min\{\DP[w_i][u], \OPT[u]\}$ if $v$ is not retrieved from any descendents of $w_i$. 
    \item $\DP[w_i][u]$ for child $w_i$ whose subtree $T_{[w_i]}$ contains $u$. This is because if so, for $v$ to be retrieved from $u$, $w_i$ must also be retrieved from $u$. Thus, we add $\DP[w_i][u]$ to $\DP[v][u]$, for this particular $w_i$. 
\end{bracketenumerate}

We also note that the partial solution on $T_{[w_i]}$ is completely independent from the partial solution on $T_{[w_j]}$, for all $i \neq j$. This allows us to directly sum over the individual optimal costs. 

The resulting $\DP[v][u]$ is also feasible: retrieving $v$ from materialized $u$ is feasible since $R(u,v) \leq \calR$, and any infeasible solution on $T_{[w_i]}$ is not considered due to its infinite $\DP$ value. 
\end{proof}

We note that by binary-searching the constraint value $\calS$, this algorithm also solves \textsc{MinMax Retrieval} on trees.

\section{FPTAS for \MSR{} via Dynamic Programming}
\label{sec:fptas}
In this section we work on \textsc{MinSum Retrieval} and present a fully polynomial time approximation scheme (FPTAS) on digraphs whose \textit{underlying undirected graph} has bounded treewidth. Similar techniques can be applied to \MMR{}, but we will focus on \MSR{} due to space constraints.


We start by describing a dynamic programming (DP) algorithm on trees in \cref{subsec:tree-dp}. In \cref{subsec:treewidth-def}, we define all notations necessary for the latter subsection. Finally, in \cref{subsec:DPBoundedTW}, we show how to extend our DP to the bounded treewidth graphs. 

\subsection{Warm-up: Bidirectional Trees}
\label{subsec:tree-dp}
As a warm-up to the more general algorithm, we present an FPTAS for bidirectional tree instances of \MSR{} via DP. This algorithm also inspired a practical heuristic \textsf{DP-MSR}, presented in \cref{subsubsec:Experiments/Algorithms/DP-heuristic}. 

Again, we assume the tree has a designated root $v_{root}$ and a parent-child hierarchy. We further assume that the tree is binary, via the standard trick of vertex splitting and adding edges of zero weight if necessary. (See \cref{appendix:WLOG-binary-tree} for details.)

\begin{figure}[h]
  \centering
  \includegraphics[width=0.9\columnwidth]{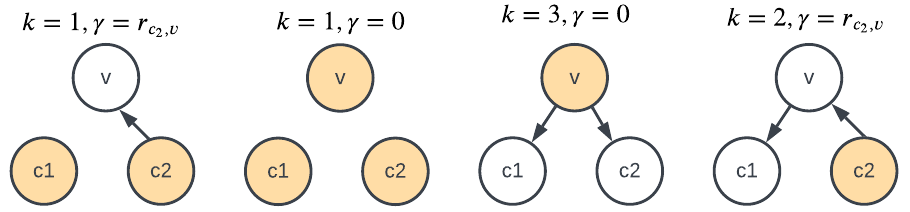}
 \caption{An illustration of DP variables in \cref{subsec:tree-dp}}
 \label{fig:DP3variabls}
\end{figure}

\subparagraph*{DP variables.} We define $\DP[v][k][\gamma][\rho]$ to be the minimum \textit{storage} cost for the subproblem with constraints $v,k,\gamma,\rho$ such that (with examples illustrated in \cref{fig:DP3variabls})
\begin{enumerate}
    \item \textit{Root for subproblem} $v \in V$ is a vertex on the tree; in each iteration, we consider the subtree rooted at $v$. 
    \item \textit{Dependency number} $k \in \mathbb{N}$ is the number of versions retrieved from $v$ (including $v$ itself) in the subproblem solution. This is useful when calculating the extra retrieval cost incurred by retrieving $v$ from its parent. 
    \item \textit{Root retrieval} $\gamma \in \mathbb{N}$ represents the cost of retrieving the subtree root $v$, if it is retrieved from a materialized descendant. This is useful when calculating the extra retrieval cost incurred by retrieving the parent of $v$ from $v$. Note that the root retrieval cost will be discretized, as specified later. 
    \item \textit{Total retrieval} $\rho \in \mathbb{N}$ represents the total retrieval cost of the subsolution. Similar to $\gamma$, $\rho$ will also be discretized.
\end{enumerate}

\begin{figure}[htp]
  \centering
  \includegraphics[width=0.9\columnwidth]{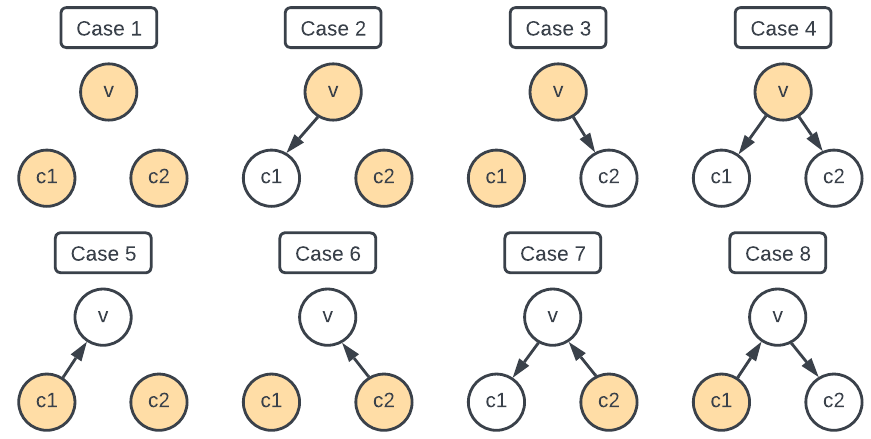}
 \caption{Eight types of connections on a binary tree. A node is colored if it is either materialized or retrieved from a node \textit{outside} the chart. Otherwise, an uncolored node is retrieved from another node as illustrated with the arrows.}
     \label{fig:8-connections}
\end{figure}

\subparagraph*{Discretizing retrieval costs.}\label{bullet-point:discretization} Let $r_{\mathrm{max}} = \max_{e\in E}\{r_e\}$. The possible total retrieval cost $\rho$ is within range $\{0,1,\ldots, n^2r_{\mathrm{max}}\}$. To make the DP tractable, we partition this range further and define \textit{approximated retrieval cost} $r'_{u,v}$ for edge $(u,v)\in E$ as follows:
$$
r'_{u,v}=\lceil \frac{r_{u,v}}{l} \rceil \quad \text{where }l = \frac{n^2r_{\mathrm{max}}}{t(\epsilon)},\, t(\epsilon)= \frac{n^4}{\epsilon},
$$
and $t(\epsilon)$ is the number of ``ticks'' we want to partition the retrieval range $[0, n^2 r_{\mathrm{max}}]$ into. The choice for $t(\epsilon)$ will be specified in the proof for \cref{thm:Tree-DP-FPTAS}. 
We will work with $r'$ in the rest of the subsection. However, by an abuse of notation, we still use $r$ for discretized retrieval for the ease of representation. 



\subparagraph*{Base case.} For each leaf $v$, we let $DP[v][1][0][0] = s_{v}$. 


\subparagraph*{Recurrence step.} On each iteration at node $v$, we consider every target configuration $\DP[v][k][\gamma][\rho]$ under each possible connection types as illustrated in \cref{fig:8-connections}. For each configuration, we go over all corresponding \textit{compatible} partial solutions on $T_{[c_1]}$ and $T_{[c_2]}$. 

\subparagraph*{} The recurrence relation for all cases is given in \cref{appendix:All-8-cases-tree-DP}. Here, we select representative cases and explain the details of calculation below:

\subsubsection{\textbf{Dealing with Dependency}}
When we decide to retrieve any child from $v$, as in case 4 of \cref{fig:8-connections}, the children $c_1,c_2$ along with all their dependencies now become dependencies of $v$. The minimum storage cost in case 4 (given $v,k,\gamma=0,\rho$) is: 
\begin{align} 
    S_4 &= s_v + s_{v,c_1} + s_{v,c_2} 
    - s_{c_1} - s_{c_2} \label{eq:dependencyline2}
    \\&+ \min_{\substack{\rho_1 + \rho_2 = \rho\\k_1 + k_2 = k-1}} \Big\{ 
     DP[c_1][k_1][0][\rho_1 - k_1 r_{v,c_1}] \tag{2a}\label{eq:dependencyline4}
    \\& + DP[c_2][k_2][0][\rho_2 - k_2 r_{v,c_2}]\Big\}\tag{2b}\label{eq:dependencyline5}
\end{align}

In \cref{eq:dependencyline4}, $v$ is required to have dependency number $k$ and root retrieval $0$. For each $k_1+k_2=k-1$, we must go through subproblems where $c_1$ has dependency number $k_1$ and $c_2$ has that of $k_2$.

Also in \cref{eq:dependencyline4}, the choice of $\rho_1, \rho_2$ determines how we are allocating retrieval costs budget $\rho$ to $c_1$ and $c_2$ respectively. Specifically in \cref{eq:dependencyline4} and \cref{eq:dependencyline5}, the total retrieval cost allocated to subproblem on $T_{[c_1]}$ is $\rho_1 - k_1\cdot r_{v,c_1}$ since an extra $k_1\cdot r_{v,c_1}$ cost is incurred by the edge $(v,c_1)$, as it is used $k_1$ times by all versions depending on $c_1$. Similar applies to the subproblem on $T_{[c_2]}$.


Next, we highlight the idea of ``invisible'' dependency here: in case 2 on $T_{[v]}$, the diffs $(v,c_1)$ and $(v,c_2)$ was not available in any previous recurrence, since $v$ has just been introduced. Therefore, the compatible solution for the subproblems on $T_{[c_1]}$ and $T_{[c_2]}$ have to materialize nodes $c_1$ and $c_2$ to ensure they can be retrieved. 
This explains the $-s_{c_1}-s_{c_2}$ terms in \cref{eq:dependencyline2}, since these costs are no longer present. 

When generalizing the DP onto graphs with bounded treewidth, similarly, restriction of a global solution does not always result in a feasible partial solution due to the existence of dependencies invisible to the subproblems. We will resolve them using similar ideas.


\subsubsection{\textbf{Dealing with Retrieval}}

In contrast with dependencies, this refers to the case where $v$ is retrieved from one of its children. We take case 5 as an example: given $v,k=0,\gamma,\rho$,
\begin{align*}
S_5 &= s_{c_1,v}
\\&+\min_{\rho_1 \le \rho}\Big\{\min_{k_1}\{DP[c_1][k_1][\gamma - r_{c_1,v}][\rho_1 - \gamma]\}
\\&+ \min_{k_2,\gamma'}\{ DP[c_2][k_2][\gamma'][\rho -\rho_1] \}\Big\}
\end{align*}

We allocate the retrieval cost similar to case 2. We will care less about the dependency number, over which we will take minimum. The retrieval cost for $c_1$ now has to be $\gamma - r_{c_1,v}$ since $v$ is retrieved from $c_1$. Note importantly that now we are counting the retrieval cost for $v$ in $\rho_1$, and so the retrieval cost budget for $T_{[c_1]}$ is now $\rho_1 - \gamma$. 

Similarly, we take minimum on all other unused parameters to get the best storage for case 5.
\subsubsection{\textbf{Combining the ideas}}

We take case 8 as an example where both retrieval and dependencies are involved. In case 8, $v$ is retrieved from child $c_1$ (retrieval), and child $c_2$ is retrieved from $v$ (dependency). Given $v,k,\gamma,\rho$, we claim that: 
\begin{align*}
    S_8 &= s_{c_1,v} + s_{v,c_2} - s_{c_2}\\
    &+ \min_{\rho_1+\rho_2 = \rho} \Big\{ \min_{k'}\{DP[c_1][k'][\gamma - r_{c_1,v}][\rho_1 - \gamma]\}\\
    & + DP[c_2][k-1][0][\rho_2 - (k-1)\cdot (r_2+\gamma)]\Big\}
\end{align*}
Note that the $c_1$ side is identical to that for case 5. In combining both dependency and retrieval cases, there is slight adjustment in the dependency side: since $v$ now might also depend on nodes further down $c_1$ side, the total extra retrieval cost created by adding edge $(v,c_2)$ becomes $(k-1)\cdot (r_2+\gamma)$ instead of $(k-1)\cdot (r_2)$.

\subparagraph*{Output.} Finally, with storage constraint $\mathcal{S}$ and root of the tree $v_{root}$, we output the configuration that outputs the minimum $\rho$ which achieves the following $$\exists k \leq n, \gamma \in \mathbb{N}\quad \text{ s.t. }\quad
DP[v_{root}][k][\gamma][\rho] \leq \mathcal{S}
$$

We shall formally state and prove the FPTAS result below. 

\begin{lemma}\label{lem:TreeFPTAS}
The DP algorithm outputs a configuration with total retrieval cost at most $\OPT + \epsilon r_{max}$ in poly$(n,1/\epsilon)$ time. 
\end{lemma}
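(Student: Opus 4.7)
The plan is to bound the discretization error, show that the DP exactly minimizes the discretized total retrieval cost among all feasible configurations, and then count the table size for the runtime.

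\textbf{Discretization error.} For every edge $e$ the rounded weight satisfies $r_e \le l \cdot r'_e \le r_e + l$, so any configuration's discretized total $l \cdot \rho$ differs from its true total by at most $l$ times the sum, over edges $e$, of the number $k(e)$ of vertices whose retrieval path uses $e$. Since each of the $n$ vertices traverses at most $n-1$ edges, $\sum_e k(e) \le n^2$, so the discrepancy is at most $n^2 l = n^2 \cdot n^2 r_{\mathrm{max}}/t(\epsilon) = \epsilon r_{\mathrm{max}}$. Therefore, if $\rho^*$ is the minimum discretized total retrieval the DP finds subject to the storage constraint, then the true cost of the output configuration is at most $l\rho^*$, and $l\rho^* \le \OPT + \epsilon r_{\mathrm{max}}$ because the true optimum configuration attains a discretized cost within $\epsilon r_{\mathrm{max}}$ of $\OPT$.

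\textbf{DP correctness.} I would prove by induction on the subtree $T_{[v]}$ that $\DP[v][k][\gamma][\rho]$ equals the minimum storage cost of a (dangling) partial solution on $T_{[v]}$ with dependency number $k$, root retrieval $\gamma$, and discretized total retrieval $\rho$, under the conventions preceding the recurrence. Leaves are immediate. For the inductive step the eight connection types of \cref{fig:8-connections} are exhaustive, so it suffices to verify, case by case, that each recurrence equals the minimum storage of a partial solution of that type. The key bookkeeping is twofold: (i)~the extra retrieval contribution of the newly introduced edge $(v, c_i)$, which enters $\rho$ as $k_i r_{v,c_i}$ when $c_i$ depends on $v$ and as $\gamma - r_{c_i,v}$ when $v$ retrieves through $c_i$; and (ii)~the \emph{invisible dependency} correction, i.e., when $(v, c_i)$ was unavailable inside the subproblem on $T_{[c_i]}$, any feasible subsolution there must have materialized $c_i$, so if we now retrieve $c_i$ through the new edge we must subtract $s_{c_i}$ from the subproblem's storage. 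Once these accounting rules are applied consistently, each case reduces to choosing an optimal partition $(k_1,k_2,\rho_1,\rho_2)$ of the budget parameters, which is exactly what the recurrences do.

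\textbf{Runtime and bottleneck.} The table has $O(n)$ choices for $v$, $O(n)$ for $k$, $O(t(\epsilon))=O(n^4/\epsilon)$ for $\gamma$, and $O(n \cdot t(\epsilon))$ for $\rho$, for a total of $\mathrm{poly}(n,1/\epsilon)$ entries. Each recurrence minimizes over $O(n)$ pairs $(k_1,k_2)$ and $O(n \cdot t(\epsilon))$ pairs $(\rho_1,\rho_2)$, so the per-entry work is also $\mathrm{poly}(n,1/\epsilon)$. The main obstacle is the case analysis in the correctness proof: individually each case is a short accounting exercise, but the invisible-dependency correction combined with the bookkeeping of retrieval budgets makes the recurrences notationally dense and easy to get wrong, so the bulk of the write-up is in organizing all eight cases cleanly.
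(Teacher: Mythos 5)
Your proposal is correct and follows essentially the same route as the paper: the core step in both is that per-edge rounding error is at most $l$, each edge is traversed at most $n^2$ times in total, so the discretized optimum $\rho^*$ satisfies $l\rho^*\leq \OPT+n^2l=\OPT+\epsilon r_{\mathrm{max}}$. You additionally spell out the DP-correctness induction and the table-size count, which the paper leaves implicit, but this is elaboration rather than a different argument.
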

\begin{proof}\label{Tree-DP-FPTAS-proof}
By setting $t(\epsilon) = \frac{n^4}{\epsilon}$, we have $l = \frac{n^2r_{max}}{t(\epsilon)}= \frac{\epsilon r_{max}}{n^2}$.
Note that we can get an approximation of the original retrieval costs by multiplying each $r_e'$ with $l$. This creates an estimation error of at most $l$ on each edge.
Note further that in the optimal solution, at most $n^2$ edges are materialized, so if $\rho^*$ is the minimal discretized total retrieval cost, we have
$$
    \text{total retrieval of output} \leq l\rho^*\leq \OPT + n^2l \leq \OPT + \epsilon r_{max}.
$$
\end{proof}

Now we prove the main theorem of this subsection:
\begin{theorem}\label{thm:Tree-DP-FPTAS}
For all $\epsilon>0$, there is a $(1+\epsilon)$-approximation algorithm for \textsc{MinSum Retrieval} on bidirectional trees that runs in poly$(n,\frac{1}{\epsilon})$ time. 
\end{theorem}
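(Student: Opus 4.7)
The plan is to upgrade the additive guarantee of \cref{lem:TreeFPTAS}, which outputs a solution of total retrieval cost at most $\OPT + \epsilon r_{\max}$, into a true $(1+\epsilon)$-multiplicative approximation via the standard ``guess the largest used edge'' trick. The obstruction to a direct appeal to \cref{lem:TreeFPTAS} is that $r_{\max}$ could in principle be arbitrarily larger than $\OPT$ (if the heaviest edge is never used in any optimal configuration), so we need to guarantee that the $r_{\max}$ of the instance on which we run the DP is at most $\OPT$.

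First, I would dispatch the trivial case separately: if $\sum_{v\in V} s_v \leq \calS$, then we materialize every version to achieve retrieval cost $0$, which is optimal. Henceforth assume $\OPT>0$, so that the optimal solution uses at least one edge. The key observation is that if an edge $e^*=(u,v)$ is used in some optimal solution, then the endpoint $v$ is retrieved along a path containing $e^*$, whence $R(v)\geq r_{e^*}$ and therefore $\OPT \geq R(v) \geq r_{e^*}$.

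Using this observation, I iterate over all $|E|$ candidate edges $e^*$ and, for each, construct a modified instance $G_{e^*}$ by marking every edge $e$ with $r_e > r_{e^*}$ as unusable (for instance, by setting its discretized retrieval cost to $\infty$ so that any partial DP solution which uses it has infinite cost and is never selected). I then run the tree DP of \cref{subsec:tree-dp} on each $G_{e^*}$ and output the cheapest feasible solution across all $|E|$ trials. For the correct guess, namely the $e^*$ that is the largest-retrieval edge actually used by some fixed optimal solution on $G$, the optimum of $G_{e^*}$ equals $\OPT$ (since the original optimal solution uses only edges still present in $G_{e^*}$, while removing edges can only increase the optimum), and $r_{\max}(G_{e^*}) = r_{e^*} \leq \OPT$. \cref{lem:TreeFPTAS} applied to $G_{e^*}$ then yields a solution of retrieval cost at most $\OPT + \epsilon \cdot r_{e^*} \leq (1+\epsilon)\OPT$. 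Since we output the best over all guesses, the final output is also a $(1+\epsilon)$-approximation. The total runtime is $O(|E|)$ invocations of a $\mathrm{poly}(n,1/\epsilon)$ DP, which is still $\mathrm{poly}(n,1/\epsilon)$.

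The subtlest point, which I expect will be the main thing to verify carefully, is that forbidding heavy edges in $G_{e^*}$ remains compatible with the DP framework: the recurrences of \cref{subsec:tree-dp} reference $r_{u,v}$ in multiple places (in particular, inside the dependency and retrieval cases that charge $k_i \cdot r_{v,c_i}$ and $\gamma - r_{c_1,v}$), but as long as forbidden edges contribute $+\infty$ to any partial solution using them, each such partial solution is dominated by a non-forbidden alternative (whenever one exists) and is otherwise correctly reported as infeasible. With this verification, \cref{lem:TreeFPTAS} applies verbatim to $G_{e^*}$ with $r_{\max}$ reinterpreted relative to its remaining edges, and the argument above closes the proof.
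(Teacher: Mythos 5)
Your proposal is correct and follows essentially the same route as the paper: both convert the additive $\OPT+\epsilon r_{\max}$ guarantee of \cref{lem:TreeFPTAS} into a multiplicative one by pruning heavy edges until $r_{\max}\leq\OPT$, using the observation that the heaviest edge actually used by an optimal solution has retrieval cost at most $\OPT$. The only (immaterial) difference is that the paper iteratively neutralizes the current heaviest edge by resetting it to $r=0$, $s=s_v$ (i.e., a materialization-equivalent edge) and reruns the DP each time, whereas you guess the threshold edge $e^*$ directly and forbid everything heavier; both variants take the best output over at most $|E|$ DP runs.
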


\begin{proof}
Given parameter $\epsilon$, we can use the DP algorithm as a black box and iterate the following for up to $n$ times:
\begin{bracketenumerate}
    \item Run the DP for the given $\epsilon$ on the current graph. Record the output. 

    \item Let $(u,v)$ be the most retrieval cost-heavy edge. We now set $r_{(u,v)}=0$ and $s_{(u,v)}=s_v$. If the new graph is infeasible for the given storage constraint, or if all edges have already been modified, exit the loop. 
\end{bracketenumerate}

At the end, we output the best out of all recorded outputs. This improves the previous bound when $r_{max} > \OPT$: at some point we will eventually have $r_{max}\leq \OPT$, which means the output configuration, if mapped back to the original input, is a feasible $(1+\epsilon)$-approximation. 
\end{proof}


\subsection{Treewidth-related Definitions}
\label{subsec:treewidth-def}
We now consider a more general class of version graphs: any $G=(V,E)$ whose \textit{underlying undirected graph}\footnote{As before, this means that $(u,v),(v,u)\in E$ for each undirected edge $\{u,v\}\in E_0$ in $G_0$. \bob{Added.}} $G_0$ has treewidth bounded by some constant $k$. 

\begin{definition}[Tree Decomposition~\cite{TreeWidthDefinition}]
A tree decomposition of undirected $G_0=(V_0,E_0)$ is a tree $T=(V_T,E_T)$, where each $z\in V_T$ is associated with a subset (``bag'') $S_z$ of $V_0$.
The bags must satisfy the following conditions:
\begin{romanenumerate}
    \item $\bigcup_{z\in V_T}S_z = V_0$;
    \item For each $v\in V_0$, the bags containing $v$ induce a connected subtree of $T$;
    \item For each $(u,v)\in E_0$, there exists $z\in V_T$ such that $S_z$ contains both $u$ and $v$. 
\end{romanenumerate}
The \emph{width} of a tree decomposition $T = (V_T,E_T)$ is $\max_{z\in V_T} \abs{S_z} - 1$. The \emph{treewidth} of $G_0$ is the minimum width over all tree decompositions of $G_0$. 
\end{definition}


It follows that undirected forests have treewidth 1. 
We further note that there is also a notion of directed treewidth \cite{DirectedTWDef}, but it is not suitable for our purpose.

We will WLOG assume a special kind of decomposition:


\begin{definition}[Nice Tree Decomposition~\cite{NiceTreeDecomp}]
A nice tree decomposition is a tree decomposition with a designated root, where each node $z$ is one of the following types:
\begin{enumerate}
    \item A \textbf{leaf}, which has no children and whose bag has size 1;
    \item A \textbf{forget node}, which has one children $c$, and $S_z\subset S_c$ and $\abs{S_c}=\abs{S_z}+1$. 

    \item An \textbf{introduce node}, which has one children $c$, and $S_z\supset S_c$ and $\abs{S_c}+1=\abs{S_z}$. 
    
    \item A \textbf{join}, which has children $c_1,c_2$, and $S_z=S_{c_1}=S_{c_2}$. 
\end{enumerate}
\end{definition}
Given a bound $k$ on the treewidth, there are multiple algorithms for calculating a tree decomposition of width $k$~\cite{exactTreewidth,exactTreewidth2,exactTreewidth3}, or an approximation of $k$~\cite{approxTreewidth1,approxTreewidth2,approxTreewidth3,approxTreewidth4}. 

For our case, the algorithm by Bodlaender~\cite{exactTreewidth2} can be used to compute a tree decomposition in time $2^{O(k^3)} \cdot O(n)$, which is linear if the treewidth $k$ is constant. Given a tree decomposition, we can in $O(\abs{V_0})$ time find a nice tree decomposition of the same width with $O(k\abs{V_0})$ nodes~\cite{NiceTreeDecomp}.

\subsection{Generalized Dynamic Programming}\label{subsec:DPBoundedTW}

Here we outline the DP for \MSR{} on graphs whose underlying undirected graph $G_0$ has treewidth at most $k$. 

\subsubsection{\textbf{DP States}} Similar to the warm-up, we will do the DP bottom-up on each $z\in V_T$ in the nice tree decomposition $T$. 

Before proceeding, let us define some additional notations. 
For any bag $z \in V_T$, let $T_{[z]}$ be the induced subtree of $T$ rooted at $z$. 
We define $V_{[z]}= \bigcup_{z' \in V(T_{z})} S_{z'}$ be the set of vertices in the bags of $T_{[z]}$, including $S_z$. Following that, $G_{[z]}$ is the induced subgraph of $G$ by vertices $V_{[z]}$.

We now define the \textit{DP states}. At a high level, each state describes some number of \textit{partial solutions} on the subgraph induced by $V_{[z]}$, $G_{[z]}$. When building a complete solution on $G$ from the partial solutions, the state variables should give us \textit{all} the information we need. 

Each DP state on $z\in V_T$ consists of a tuple of functions $$\calT_z = (\Par_z,\Dep_z,\Rec_z,\Anc_z)$$ and a non-negative integer $\rho_z$:
\begin{enumerate}
    \item \textit{Parent function} $\Par_z:S_z\mapsto V_{[z]}$ describing the partial solution on $G_{[z]}$, restricted on $S_z$. If $\Par_z(v)\neq v$ then $v$ will be retrieved through the edge $(\Par_z(v),v)$. If $\Par_z(v)=v$ then $v$ will be materialized. 
    
    \item \textit{Dependency function} $\Dep_z:S_z\mapsto [n]$. Similar to the dependency parameter in the warm-up, $\Dep_z(v)$ counts the number of nodes in $V_{[z]}$ retrieved through $v$. 

    \item \textit{Retrieval cost function} $\Rec_z: S_z\mapsto \{0,\ldots,nr_{\max}\}$. Similar to the root retrieval parameter in the warm-up, $\Rec_z(v)$ denotes the retrieval cost of version $v$ in the partial solution on $G_{[z]}$. 

    \item \textit{Ancestor function} $\Anc_z: S_z\mapsto 2^{S_z}$. 
    If $u\in\Anc_z(v)$, then $u$ is retrieved in 
    order to retrieve $v$ in this partial solution, 
    i.e., $v$ is dependent on $u$. We need this extra information to avoid directed cycles. 

    \item $\rho_z$, the total retrieval cost of the subproblem according to the partial solution. Similar to its counterpart in the warm-up, all retrieval costs would be discretized by the same technique that makes the approximation an FPTAS. 
\end{enumerate}

A feasible state on $z\in V_T$ is a pair $(\calT_z,\rho_z)$ which correctly describes some partial solution on $G_{[z]}$ whose retrieval cost is exactly $\rho_z$. Each state is further associated with a storage value $\sigma(\calT_z,\rho_z)\in \mathbb{Z}^+$, indicating the minimum storage needed to achieve the state $(\calT_z,\rho_z)$ on $G_{[z]}$. 


We are now ready to describe how to compute the states.

\subsubsection{\textbf{Recurrence on Leaves}} For each leaf $z\in V_T$, the only feasible partial solution is to materialize the only vertex $v$ in the leaf bag. We can easily calculate its state and storage cost. 

\subsubsection{\textbf{Recurrence on Forget Nodes}}
This is also easy: for a forget node $z$ with child $c$, we have $G_{[z]}=G_{[c]}$, and hence the states on $z$ are simply the restrictions of states on $c$. 

\subsubsection{\textbf{Recurrence on Introduce Nodes}}
At introduce node $z$ with child $c$, we have $S_z = S_c \cup \{v_0\}$ for some ``introduced'' $v_0$. Each feasible state $(\calT_z,\rho_z)$ on $z$ must correspond to some state $(\calT_c, \rho_c)$ on $c$, which we can calculate as follows:

We first initialize $\calT_c$ to be the respective functions in $\calT_z$ restricted on $S_c$. For instance, $\Par_c=\Par_z\mid_{S_c}$, the restriction of $\Par_z$ on domain $S_c$. 

If $v_0$ is retrieved through $u\in S_c$ according to $\calT_z$ ($\Par_z(v_0)=u$), then we remove the dependencies related to $v_0$ and the retrieval cost incurred on edge $(u,v_0)$. Specifically:
\begin{bracketenumerate}
    \item Decrease the value of $\Dep_c$ by 1 on all vertices in $\Anc_z(u)$.

    \item Decrease $\rho_c$ by $\Dep_z(v_0)\cdot \Rec_z(v_0)$. 

    \item Remove $\Anc_z(u)$ from the ancestor functions of all descendants of $z$. 
\end{bracketenumerate}

If $v_0$ has some child $w$ according to $\calT_z$ (namely, $\Par_z(w)=v_0$), then we reverse the \textit{uprooting} process in the warm-up, such that vertex $w$, which was not a root in $\calT_z$, is now a root in $\calT_c$. Specifically:

\begin{bracketenumerate}
    \item Let $\Par_c(w)=w$. 
    
    \item Remove $v_0$ from the ancestor function of $w$ and all its descendants.
    
    \item Decrease the retrieval cost function of $w$ and its descendants by $\Rec_z(w)$.
    
    \item Decrease $\rho_c$ by $\Rec_z(w)\times \Dep_z(w)$. 
\end{bracketenumerate}

Since $v$ could have multiple children, the last procedure is potentially repeated multiple times. 

\begin{algorithm}
    \caption{\label{alg:Compatibility} \textsc{Compatibility}}
    \SetKwInOut{Input}{Input}
    \Input{ $S_z,\calT_z,\calT_a,\calT_b$}

    \tcc{\textsc{External-Retrieval} returns the ``true restrictions'' of the $\Par$, $\Anc$, and $\Rec$ functions.}
    $\calT'_a,\calT'_b \gets $\textsc{External-Retrieval}$(S_z,\calT_z)$\;
    \If{$\calT'_a$ disagree with $\calT_a$ or $\calT'_b$ disagree with $\calT_b$ on functions $\Par,\Anc,$ or $\Rec$}{
        \Return \textbf{False};
    }
    \tcc{For each $v\in S_z$, \textsc{External_Dependency} returns the dependency of $v$ that are outside of $S_z$.}
    $\ExDep_z\gets$ \textsc{External_Dependency} $(S_z,\calT_z)$\;
    $\ExDep_a\gets$ \textsc{External_Dependency} $(S_z,\calT_a)$\;
    $\ExDep_b\gets$ \textsc{External_Dependency} $(S_z,\calT_b)$\;

        \If{$\ExDep_z \neq \ExDep_a + \ExDep_b$} {
            \Return \textbf{False}\;
        }
        \Return \textbf{True}\;
\end{algorithm}

\subsubsection{\textbf{Recurrence on Joins}} Suppose we are at a join $z$ with children $a,b$, where $S_z=S_a = S_b$. On a high level, for each state $(\calT_z,\rho_z)$ on $G_{[z]}$, we want to find all pairs of states $(\calT_a,\rho_a)$ and $(\calT_b,\rho_b)$ such that the partial solutions they describe can combine into a partial solution on $G_{[z]}$, as described by $(\calT_z,\rho_z)$. 

\subparagraph*{Compatibility. } The algorithm \textsc{Compatibility} (\cref{alg:Compatibility})
decides whether $\calT_a,\calT_b$ are indeed how $\calT_z$ looks like when restricted to $G_{[a]}$ and $G_{[b]}$ respectively. If the algorithm returns true, we proceed to calculate the correct value of $\rho_a+\rho_b$ based on this particular restriction.

\begin{figure}[h]
  \centering
  \includegraphics[width=0.8\columnwidth]{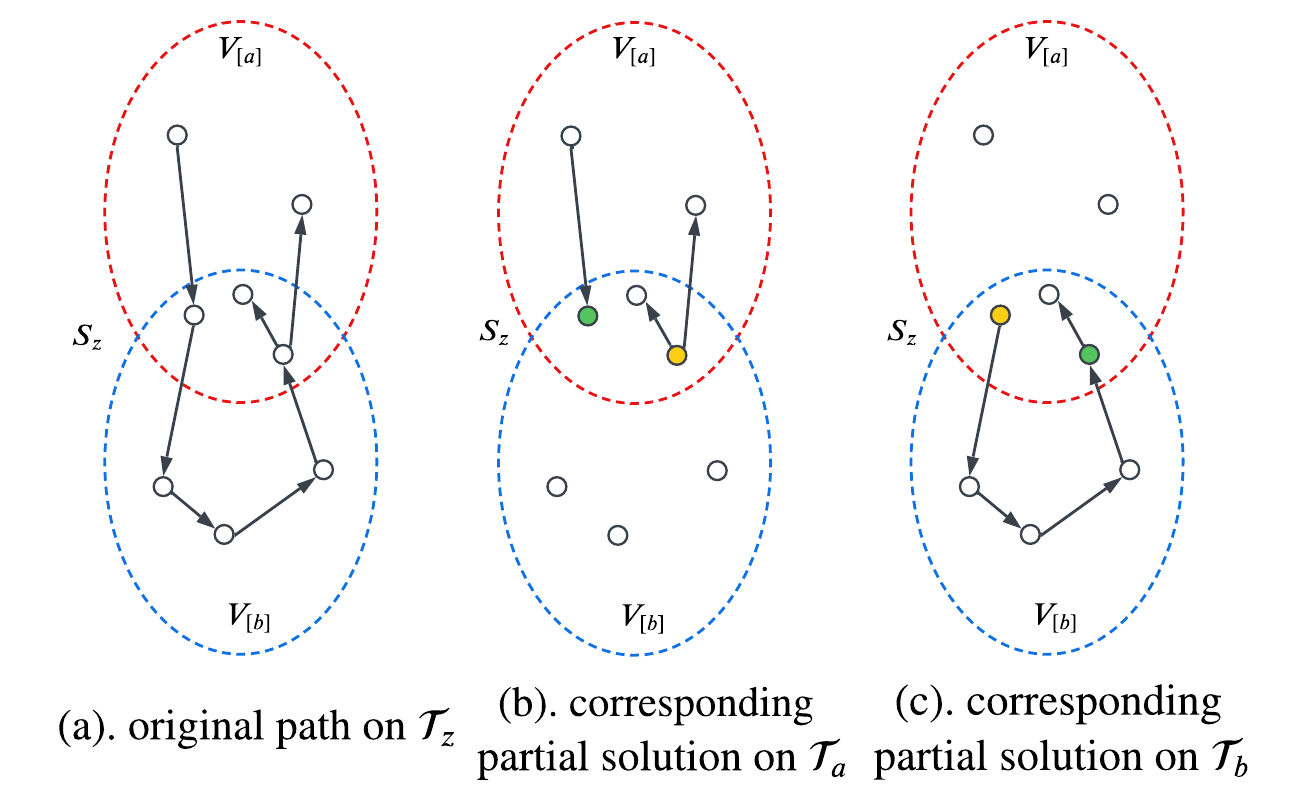}
 \caption{Illustration for compatibility. Figures (b) and (c) show a pair of compatible configurations on $\calT_a$ and $\calT_b$ with the configuration on $\calT_z$ in (a). The configurations of yellow nodes and green nodes are analyzed in \textsc{External-Retrieval} and \textsc{External-Dependency} respectively.}
     \label{fig:BTWCompatible}
\end{figure}

\subparagraph*{Resolving external retrieval.} \textsc{Compatibility} first deals with the vertices that are retrieved from outside $S_z$. For example, each $v\in S_z$ retrieved from $V_{[a]}\setminus S_z$, like the yellow node in (c) of \cref{fig:BTWCompatible}, is instead materialized from $\calT_b$'s perspective. To check whether $\calT_a$ and $\calT_b$ resolve all such cases correctly, we define subroutine \textsc{External-Retrieval} 
(\cref{alg:UnUproot}) 
to loop through $S_z$ topologically and calculate the correct $\Par, \Rec, \Anc$ functions for both $\calT_a$ and $\calT_b$. 

\begin{algorithm}
    \caption{\label{alg:UnUproot} \textsc{External-Retrieval}}
    \SetKwInOut{Input}{Input}
    \Input{ $S_z, \calT_z$}
    Let $\calT'_a = \calT'_b = \calT_z$\;
    Sort $S_z$ in topological order according to $\Anc_z$\;
    \For{$v\in S_z$} {
        \tcc{Removing external ancestors from $a$ iteratively.}
        \If{$\Par_z(v)\in V_{[a]}\setminus S_z$} {
            $\Par'_b(v)=v$\;
            \For{$w\in S_z$ with $w\neq v$ and $v\in \Anc'_b(w)$} {
                $\Rec'_b(w) \minuseq \Rec'_b(v)$\;
                $\Anc'_b(w) \gets \Anc'_b(w)\setminus \Anc'_b(v)$\;
            }
            $\Rec'_b(v) \gets 0$\;
            $\Anc'_b(v) \gets \varnothing$\;
        }

        \tcc{Removing external ancestors from $b$ iteratively.}
        \If{$\Par_z(v)\in V_{[b]}\setminus S_z$} {
            $\Par'_a(v)=v$\;
            \For{$w\in S)z$ with $w\neq v$ and $v\in \Anc'_a(w)$} {
                $\Rec'_a(w) \minuseq \Rec'_a(v)$\;
                $\Anc'_a(w) \gets \Anc'_a(w)\setminus \Anc'_a(v)$\;
            }
            $\Rec'_a(v) \gets 0$\;
            $\Anc'_a(v) \gets \varnothing$\;
        }
    }
    \Return $\calT'_a,\calT'_b$\;
\end{algorithm}

\subparagraph*{Resolving external dependency.} The next step in \textsc{Compatibility} is to check whether the functions $\Dep_a,\Dep_b$ are compatible with $\Dep_z$. Specifically, nodes in $S_z$ could have \textit{external dependencies} in $V_{[a]}\setminus S_z$ and $V_{[b]}\setminus S_z$, an example being the green nodes in \cref{fig:BTWCompatible} and \cref{fig:BTWExternalDependency}. The specific definition of $\ExDep_a(v)$ is the number of descendants that $v$ have outside $S_z$, to whom $v$ is the \textit{closest} ancestor in $S_z$, according to $\calT_a$. To see an example, note that only four red nodes are counted towards $\ExDep_a(A)$ in \cref{fig:BTWExternalDependency}. The functions $\ExDep_b$ and $\ExDep_z$ are defined similarly according to $\calT_b$ and $\calT_z$. 

We need that $\ExDep_a(v)+\ExDep_b(v)=\ExDep_z(v)$ for all  $v\in S_z$ in order for $(\calT_a,\calT_b)$ to be compatible with $\calT_z$. To check this, we call \textsc{External-Dependency} (\cref{alg:ExtDep}) on $\calT_z,\calT_a,\calT_b$ as a subroutine of \textsc{Compatibility.} We note that this is similar to distributing the dependency number $k$ to the two children in case $4$ of \cref{fig:8-connections}. 

\begin{algorithm}
    \caption{\label{alg:ExtDep} \textsc{External-Dependency}}
    \SetKwInOut{Input}{Input}
    \Input {$S, \calT$}
    Sort $S$ in topological order according to $\Anc$\;
    \For{$v\in S$}{
        Let $\ExDep(v) = \Dep(v) - \sum\limits_{w\in S:\Par(w)=v} \Dep(w)$\;
    }
    \For{$v\in S$}{
        \If{$\Par(v)\not\in S$}{
            \For{$u\in\Anc(v)$ with $u\neq v$}{
                $\ExDep(u)\minuseq\ExDep(v)$
            }
        }
    }
    \Return $\ExDep$\;
\end{algorithm}

\begin{figure}[h!]
  \centering
  \includegraphics[width=0.75\columnwidth]{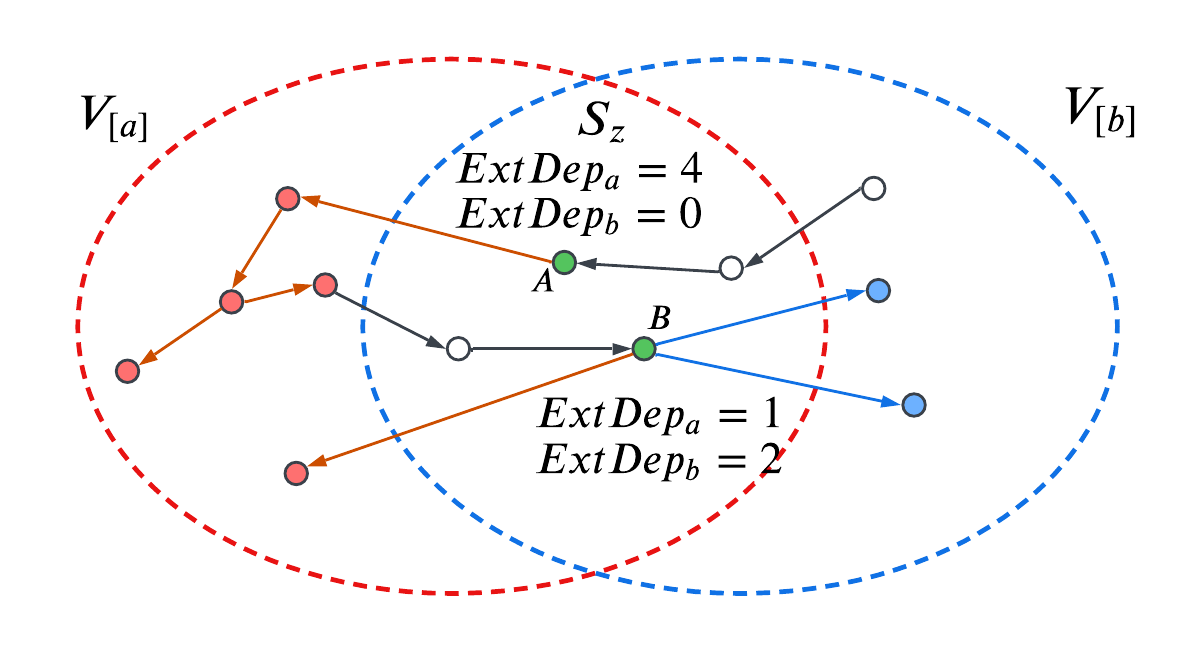}
 \caption{Illustration for external dependency. Green nodes $A$ and $B$ both have non-zero external dependency, as labeled in the figure. }
     \label{fig:BTWExternalDependency}
\end{figure}


\subparagraph*{Calculating $\rho$.} Given that $(\calT_a,\calT_b)$ are compatible with $\calT_z$, we want to find the objective, $\sigma(\calT_z,\rho_z)$, with the recurrence relation involving $\sigma(\calT_a,\rho_a)+\sigma(\calT_b,\rho_b)$ for suitable $\rho_a$ and $\rho_b$. However, we cannot simply take $\rho_a+\rho_b = \rho_z$ due to the complicated procedure of combining $\calT_a,\calT_b$ into $\calT_z$.
We thus implement \textsc{Distribute Retrieval} (\cref{alg:distributeRetrieval}) to calculate $\rho_\Delta$ such that $\rho_a+\rho_b  = \rho_z - \rho_\Delta$ and then iterate through all such $\rho_a$ and $\rho_b$. 

\begin{algorithm}
    \caption{\label{alg:distributeRetrieval} \textsc{Distribute Retrieval}}
    \SetKwInOut{Input}{Input}
    \Input {$S_z,\calT_z,\rho_z,S_a,S_b,\calT_a,\calT_b$}
    \tcc{We want $\rho_z = \rho_a+\rho_b + \rho_\Delta$:}
    $\rho_\Delta\gets 0$\; 
    \For {$v\in S_z$ such that $\Par_z(v)\neq v$} {
        \tcc{The number of times $r_{\Par_z(v),v}$ is counted towards $\rho_z$, minus the number of times it is counted towards $\rho_a$ and $\rho_b$:}
        $\Count\gets \Dep_z(v)$\; 
        \If {$\Par_a(v) = \Par_z(v)$} {
            $\Count \minuseq \Dep_a(v)$\;
        }
        \If {$\Par_b(v) = \Par_z(v)$} {
            $\Count \minuseq \Dep_b(v)$\;
        }
        \eIf {$\Par_z(v) \in S_z$} {
            \tcc{The edge $r_{\Par_z(v),v}$ is over/undercounted:}
            $\rho_\Delta \gets \rho_\Delta + \Count \cdot r_{\Par_z(v),v}$\; 
        }{ 
            \tcc{The entire $\Rec_z(v)$ is over/undercounted:} 
            $\rho_\Delta \gets \rho_\Delta + \Count \cdot \Rec_z(v)$\;
        }
    }
    \Return $\rho_\Delta$\;
\end{algorithm}


\subparagraph*{Recurrence relation.} Finally, we have all we need for the recurrence relation. For each feasible $(\calT_z,\rho_z)$, we take
$$
\sigma(\calT_z,\rho_z) = \min\left\{\sigma(\calT_a,\rho_a)+\sigma(\calT_b,\rho_b) - \textit{uproot} - \textit{overcount}\right\}
$$
where the minimum is taken over all $(\calT_a,\calT_b)$ that are compatible with $\calT_z$ and all $\rho_a+\rho_b=\rho_z-\rho_\Delta$, and where
$$
    \textit{uproot} = \sum_{v\in U_a} (s_{v}-s_{\Par_z(v),v}) + \sum_{v\in U_b} (s_{v}-s_{\Par_z(v),v}), \quad 
    \textit{overcount} = \sum_{v\in S_a\cap S_b} s_{\Par_z(v),v}.
$$

If $k$ is constant, then the recurrence relation takes poly$(n)$ time. This is because there are poly$(n)$ many possible choices of $\calT$ and $\rho$ on $a,b,z$,and it takes poly$(n)$ steps to check the compatibility of $(\calT_a,\calT_b)$ with $\calT_z$ and compute $\rho_\Delta$. 

\subparagraph*{Output.} The minimum retrieval cost of a global solution is just $\min\{\rho_z:\exists \calT_z,\sigma(\calT_z, \rho_z)\leq \calS\}$ over all feasible $(\calT_z,\rho_z)$, where $z$ is the designated root of the nice tree decomposition.

We conclude this section with the following theorem. 
\begin{theorem}\label{thm:BTW-DP-is-FPTAS}
    For a constant $k\geq 1$, on the set of graphs whose undelying undirected graph has treewidth at most $k$, \textsc{MinSum Retrieval} admits an FPTAS.
\end{theorem}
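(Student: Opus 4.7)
The plan is to combine three ingredients: (a) correctness of the DP described in \cref{subsec:DPBoundedTW} without discretization, (b) a polynomial bound on the number of DP states when $k$ is constant, and (c) the discretization-plus-zeroing trick used for the tree warm-up in \cref{subsec:tree-dp}.

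First I would establish structural correctness by induction on the nice tree decomposition. For each $z\in V_T$, I claim that a pair $(\calT_z,\rho_z)$ is feasible with storage value $\sigma(\calT_z,\rho_z)$ if and only if there exists a partial solution on $G_{[z]}$ whose restriction to $S_z$ is described exactly by $(\Par_z,\Dep_z,\Rec_z,\Anc_z)$, whose total retrieval cost is $\rho_z$, and whose storage cost is $\sigma(\calT_z,\rho_z)$. The base case for leaves is immediate. Forget nodes preserve $G_{[z]}=G_{[c]}$, so their correctness is a direct restriction. Introduce nodes require that we correctly ``undo'' the contribution of the new vertex $v_0$; the recipe given in the excerpt (reversing dependencies and the uprooting of children of $v_0$) is exactly the analogue of the dependency/retrieval bookkeeping from the tree warm-up, so correctness follows by the same local accounting argument. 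The most delicate case is the join.

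For the join $z$ with children $a,b$, I would argue as follows. Any partial solution on $G_{[z]}$ restricts to partial solutions on $G_{[a]}$ and $G_{[b]}$; by property (iii) of tree decompositions, every edge of $G_{[z]}$ lives in either $G_{[a]}$ or $G_{[b]}$, and by property (ii), any vertex whose bag intersects both $V_{[a]}\setminus S_z$ and $V_{[b]}\setminus S_z$ must lie in $S_z$. Thus the only interaction between the two sides goes through $S_z$, and correctness reduces to verifying that \textsc{Compatibility} accepts exactly those $(\calT_a,\calT_b)$ arising as restrictions of some global $\calT_z$. \textsc{External-Retrieval} captures the fact that a vertex $v\in S_z$ retrieved externally from $V_{[a]}\setminus S_z$ must appear ``materialized'' from $b$'s perspective; \textsc{External-Dependency} captures the additive split of descendants living outside $S_z$. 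Given compatibility, \textsc{Distribute Retrieval} corrects the double- or under-counted $r_{\Par_z(v),v}$ contributions between the two sides, and the \textit{uproot}/\textit{overcount} terms similarly reconcile storage. This is the step I expect to be the main obstacle, since it requires a careful case analysis to make sure every edge and materialization cost is counted exactly once across $\sigma(\calT_a,\rho_a)+\sigma(\calT_b,\rho_b)$.

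Next I would bound the number of DP states. Since $|S_z|\le k+1$, each of $\Par_z,\Dep_z,\Rec_z,\Anc_z$ has at most $n^{O(k)}$, $n^{O(k)}$, $(nr_{\max})^{O(k)}$, and $2^{O(k^2)}$ possibilities respectively, but after discretization $\Rec_z$ and $\rho_z$ range over only $\mathrm{poly}(n,1/\epsilon)$ values. For constant $k$ the total number of states per bag is $\mathrm{poly}(n,1/\epsilon)$, and the join step iterates over pairs of such states, so the total runtime is $\mathrm{poly}(n,1/\epsilon)$.

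Finally, for the approximation guarantee I would replay the argument from \cref{lem:TreeFPTAS} and \cref{thm:Tree-DP-FPTAS} verbatim. Discretize by $l=\frac{n^2 r_{\max}}{t(\epsilon)}$ with $t(\epsilon)=n^4/\epsilon$; since at most $n^2$ retrieval operations contribute to the objective, the additive error is at most $n^2 l=\epsilon r_{\max}$. Thus the output has retrieval cost at most $\OPT+\epsilon r_{\max}$. To convert this additive bound into a multiplicative $(1+\epsilon)$ factor, I invoke the same iterative loop: up to $n$ times, zero out the currently heaviest retrieval edge (setting its storage to $s_v$) and rerun the DP. Eventually $r_{\max}\le\OPT$ on the modified instance, at which point $\OPT+\epsilon r_{\max}\le (1+\epsilon)\OPT$, and since the modification can only increase cost, the best recorded output remains a feasible $(1+\epsilon)$-approximation to the original instance. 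Putting the three parts together yields the FPTAS.
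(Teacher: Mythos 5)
Your plan is exactly the paper's own argument: the paper proves \cref{thm:BTW-DP-is-FPTAS} by observing that the number of partial solutions $(\calT_z,\rho_z)$ per bag is poly$(n)$ for constant $k$ and then reusing the discretization and edge-zeroing analysis of \cref{lem:TreeFPTAS} and \cref{thm:Tree-DP-FPTAS} verbatim. Your write-up is somewhat more detailed than the paper's (which leaves the correctness of the join recurrence implicit), but the decomposition into DP correctness, state counting, and the additive-to-multiplicative rounding argument is the same.
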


To see that our algorithm above is an FPTAS for \MSR{}, the proof is almost identical to the proof of \cref{thm:Tree-DP-FPTAS} (\cref{Tree-DP-FPTAS-proof}) once we note that the number of partial solutions on each $z$ is poly$(n)$. 

An FPTAS for \MMR{} arises from a similar procedure. When the objective becomes the maximum retrieval cost, we can use $\rho_z$ to represent the maximum retrieval cost in the partial solution. We then modify $\Dep_z(v)$ to represent the highest retrieval cost among all the nodes that are dependent on $v$. The recurrence relation is also changed accordingly. One can note that, like before, the new tuple $\calT_z$ contains all the information we need for a subsolution on $G_{[z]}$. 

The same algorithms extend to $(1,1+\epsilon)$ bi-criteria approximation algorithms for \BSR{} and \BMR{} naturally, as the objective and constraint are reversed.

\section{Heuristics on MSR and BMR}
In this section, we propose three new heuristics that are inspired by empirical observations and theoretical results. 

\subsection{\textsf{LMG-All}: Improvement over \textsf{LMG}}\label{sec:LMG-All}

We propose an improved version of \LMG{} (\cref{alg:LMG}), which we name \LMGA{}. (See \cref{alg:LMG-All} for pseudocode.) \LMGA{} enlarges the scope of the search on each greedy step. Instead of searching for the most efficient version to \textit{materialize}, we explore the payoff of \textit{modifying any single edge}:

\begin{enumerate}
    \item Find a configuration that minimizes total storage cost. 
    \item Let $\Par(v)$ be the current parent of $v$ on retrieval path. In addition to $V_{active}$, Define edge set $E_{active}$ to be the edges that (a) does not cause the configuration to exceed storage constraint $\calS$, and (b) does not form cycles, if $(u,v)\in E_{active}$ were to replace $(\Par(v),v)$ in the current configuration. If $V_{active}=E_{active}=\varnothing$, output the current configuration. 
    \item Calculate cost and benefit of each $v\in V_{active}$ and $e\in E_{active}$. Materialize or store the most cost-effective node or edge. Go to step 2 and repeat. 
\end{enumerate}

\begin{algorithm}
    \caption{\label{alg:LMG-All} \textsf{LMG-ALL}}
    \SetKwInOut{Input}{Input}
    \Input {Version graph $G$, storage constraint $\calS$}
    $G_{aux}\gets$ extended version graph with auxiliary root. 
    \tcc{See \cref{alg:LMG} for the construction of $G_{aux}$.}
    $T\gets$ minimum arborescence of $G_{aux}$ rooted at $v_{aux}$ with respect to weight function $s$\;
    Let $R(T)$ and $S(T)$ be the total retrieval and storage cost of $T$\;
    Let $P(v)$ be the parent of $v$ in $T$\;
    \While{$S(T) < \calS$}{
        $(\rho_{max},(u_{max},v_{max}))\gets (0,\varnothing)$\;
        \For{$e=(u,v)\in E$ where $u$ is not a descendant of $v$ in $T$}{
            $T_e = T\setminus (P(v),v)\cup\{e\}$\;
            \If {$R(T_e) > R(T)$} {
                \textbf{continue}\;
            }\eIf {$S(T_e) \leq S(T)$}{
                $\rho_e \gets \infty$;
            }{
                $\rho_{e}\gets (R(T)-R(T_e))/(s_e - s_{P(v),v})$;
            }

            \If{$\rho_e > \rho_{max}$}{
                $\rho_{max}\gets \rho_e$\;
                $(u_{max},v_{max})\gets e$\;
            }
        }
        \If{$\rho_{max} = 0$}{
            \Return $T$ \;
        }
        
        $T\gets T\setminus \{(P(v_{max}),v_{max})\} \cup \{(u_{max},v_{max})\}$\;
    }
    \Return $T$\;
\end{algorithm}

While \LMGA{} considers more edges than \LMG{}, it is not obvious that \LMGA{} always provides a better solution, due to its greedy nature.

\subsection{DP on Extracted Bidirectional Trees}
\label{subsubsec:Experiments/Algorithms/DP-heuristic}
We propose DP heuristics for both MSR and BMR, as inspired by algorithms in \cref{Sec:arbor,sec:fptas}.

To ensure a reasonable running time, we extract bidirectional trees\footnote{Recall this means a digraph whose underlying undirected graph is a tree, as in \cref{sec:fptas}} from input graphs and run the DP for treewidth 1 on the extracted graph, with the steps below:

\begin{bracketenumerate}
    \item Fix a node $v_{root}$ as the root. Calculate a minimum spanning arborescence $A$ of the graph $G$ rooted at $v_{root}$. We use the sum of retrieval and storage costs as weight. 
    \item Generate a bidirectional tree $G'$ from $A$. Namely, we have $(u,v),(v,u)\in E(G')$ for each edge $(u,v)\in E(A)$. 
    \item Run the proposed DP for MSR and BMR on directed trees (see \cref{subsec:tree-dp} and \cref{sec:MMRBMR}) with input $G'$.
\end{bracketenumerate}

In addition, we also implement the following modifications for \MSR{} to further speed up the algorithm: 
\begin{enumerate}
    \item Total \textit{storage} cost (instead of retrieval) is discretized and used as DP variable index, since it has a smaller range than retrieval cost.
    \item Geometric discretization is used instead of linear discretization, reducing the number of discretized ``ticks.''
    \item A pruning step is added, where the DP variable discards all subproblem solutions whose storage cost exceeds some threshold. This reduces redundant computations. 
\end{enumerate}

All three original features are necessary in the proof for our theoretical results, but in practice, the modified implementations show comparable results but significantly improves the running time. 
\section{Experiments for MSR and BMR}\label{sec:experiments}
In this section, we discuss the experimental setup and results for empirical validation of the algorithms' performance, as compared to
previous best-performing heuristic: \LMG{} for \MSR{}, and \MP{} for \BMR{}.\footnote{Our code can be found at https://anonymous.4open.science/r/Graph-Versioning-7343/README.md.}

In all figures, the vertical axis (objective and run time) is presented in \textit{logarithmic scale}. Run time is measured in \textit{milliseconds}. 

\subsection{Datasets and Construction of Graphs}
As in Bhattacherjee et al~\cite{versioningAmol}\bob{More justifications?}, we conduct experiment on real-world GitHub repositories of varying sizes as datasets. We construct version graphs as follows. Each commit corresponds to a node with its storage cost equal to its size in bytes. Between each pair of parent and child commits, we construct bidirectional edges. The storage and retrieval costs of the edges are calculated, in bytes, based on the actions (such as addition, deletion, and modification of files) required to change one version to the other in the direction of the edge. We use simple \texttt{diff} to calculate the deltas, hence the storage and retrieval costs are proportional to each other. Graphs generated this way are called ``\textbf{natural graphs}'' in the rest of the section. 

In addition, we also aim to test (1) the cases where the retrieval and storage costs of an edge can greatly differ from each other, and (2) the effect of tree-like shapes of graphs on the performance of algorithms. Therefore, we also conduct experiments on modified graphs in the following two ways:

\subparagraph*{Random compression.} 
We simulate compression of data by scaling storage cost with a rand
om factor between 0.3 and 1, and increasing the retrieval cost by 20\% (to simulate decompression). The resulting storage and retrieval costs are potentially very different. 

\begin{table}[t]
\begin{center}
    \footnotesize
    \begin{tabular}{|l||c|c|c|c|}
        \hline
        \textbf{Dataset} & \textbf{\#nodes} & \textbf{\#edges} & \textbf{avg. cost $s_v$} & \textbf{avg. cost $s_e$} \\
        \hline 
        \texttt{datasharing} & 29 & 74 & 7672 & 395 \\
        \hline 
        \texttt{styleguide} & 493 & 1250 & $1.4\times 10^6$ & 8659 \\
        \hline 
        \texttt{996.ICU} & 3189 & 9210 & $1.5\times 10^7$ & 337038 \\
        \hline
        \texttt{freeCodeCamp} & 31270 & 71534 & $2.5 \times 10^7$ & 14800 \\
        \hline
        \texttt{LeetCodeAnimation} & 246 & 628 & $1.7 \times 10^8$ & $1.2\times 10^7$ \\
        \hline 
        \texttt{LeetCode} (0.05) & 246 & 3032 & $1.7 \times 10^8$ & $1.0 \times 10^8$ \\
        \hline 
        \texttt{LeetCode} (0.2) & 246 & 11932 & $1.7 \times 10^8$ & $1.0 \times 10^8$ \\
        \hline 
        \texttt{LeetCode} (1) & 246 & 60270 & $1.7 \times 10^8$ & $1.0 \times 10^8$ \\
        \hline 
    \end{tabular}
\end{center}
\caption{Natural and ER graphs overview. }
\label{table:datasets}
\end{table}

\subparagraph*{ER construction.} Instead of the naturally constructing edges between each pair of parent and child commits, we construct the edges as in an Erdős-Rényi random graph: between each pair $(u,v)$ of versions, with probability $p$ both deltas $(u,v)$ and $(v,u)$ are constructed, and with probability $1-p$ neither are constructed. The resulting graphs are much less tree-like.\footnote{ER graphs have treewidth $\Theta(n)$ with high probability if the number of edges per vertex is greater than a small constant~\cite{ERTreewidth}.} We construct ER graphs from the repository LeetCode because it has a moderate size and is the least tree-like.\footnote{On LeetCode, the average unnatural delta is 10 times more costly than a natural delta. This ratio is around 100 for other repositories.} \bob{Added explanations for using LeetCode. }






\subsection{Results in MSR}

\begin{figure}[h]
  \centering
  \includegraphics[width=0.55\columnwidth]{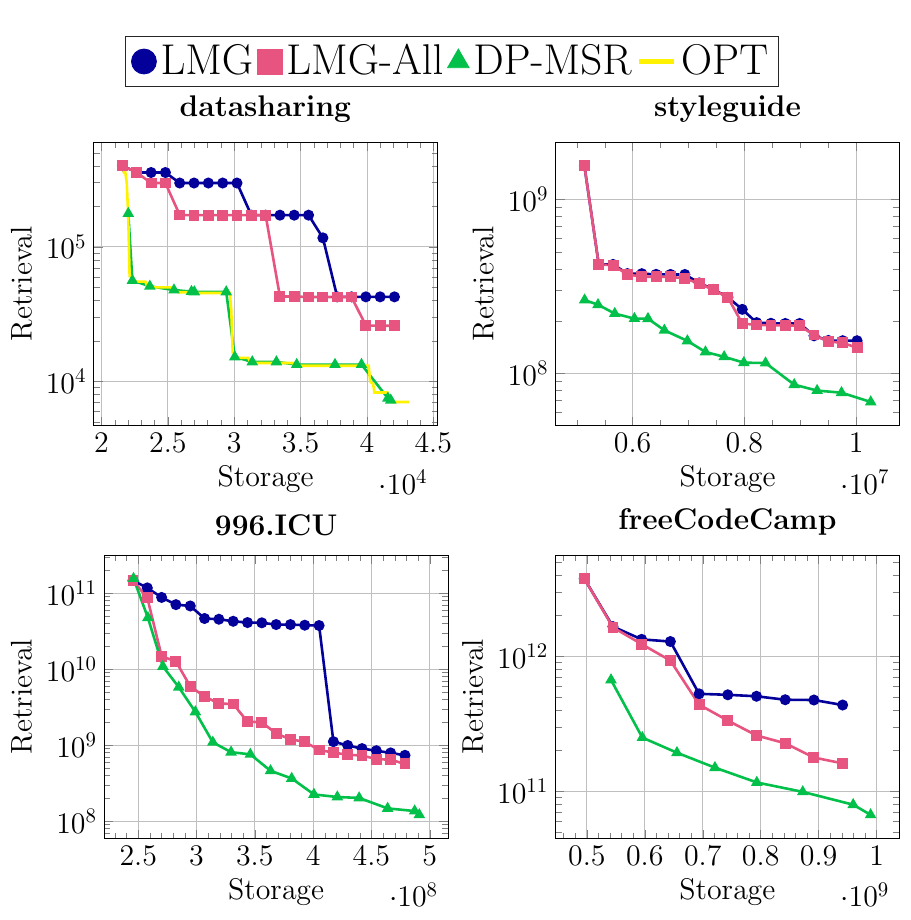}
 \caption{Performance of MSR algorithms on natural graphs. OPT is obtained by solving an integer linear program (ILP, see \cref{appendix:ILP}) using Gurobi~\cite{gurobi}. ILP takes too long to finish on all graphs except \texttt{datasharing}. }
 \label{fig:MSR-performance-comparison}
\end{figure}

\begin{figure}[h]
  \centering
  \includegraphics[width=0.55\columnwidth]{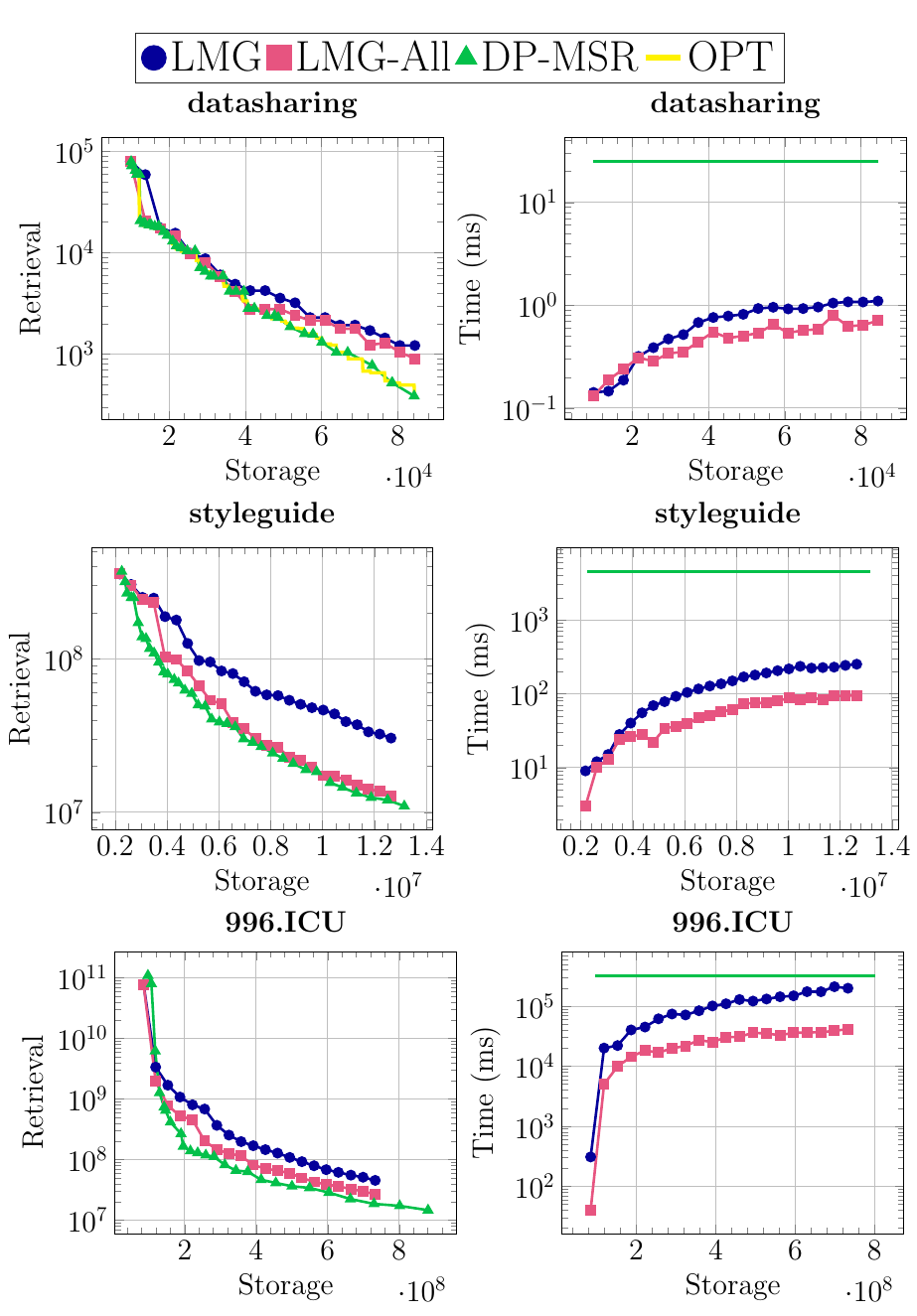}
 \caption{Performance and run time of MSR algorithms on compressed graphs.}
     \label{fig:MSR-performance-comparison-compressed}
\end{figure}

\begin{figure}[h]
  \centering
  \includegraphics[width=0.55\columnwidth]{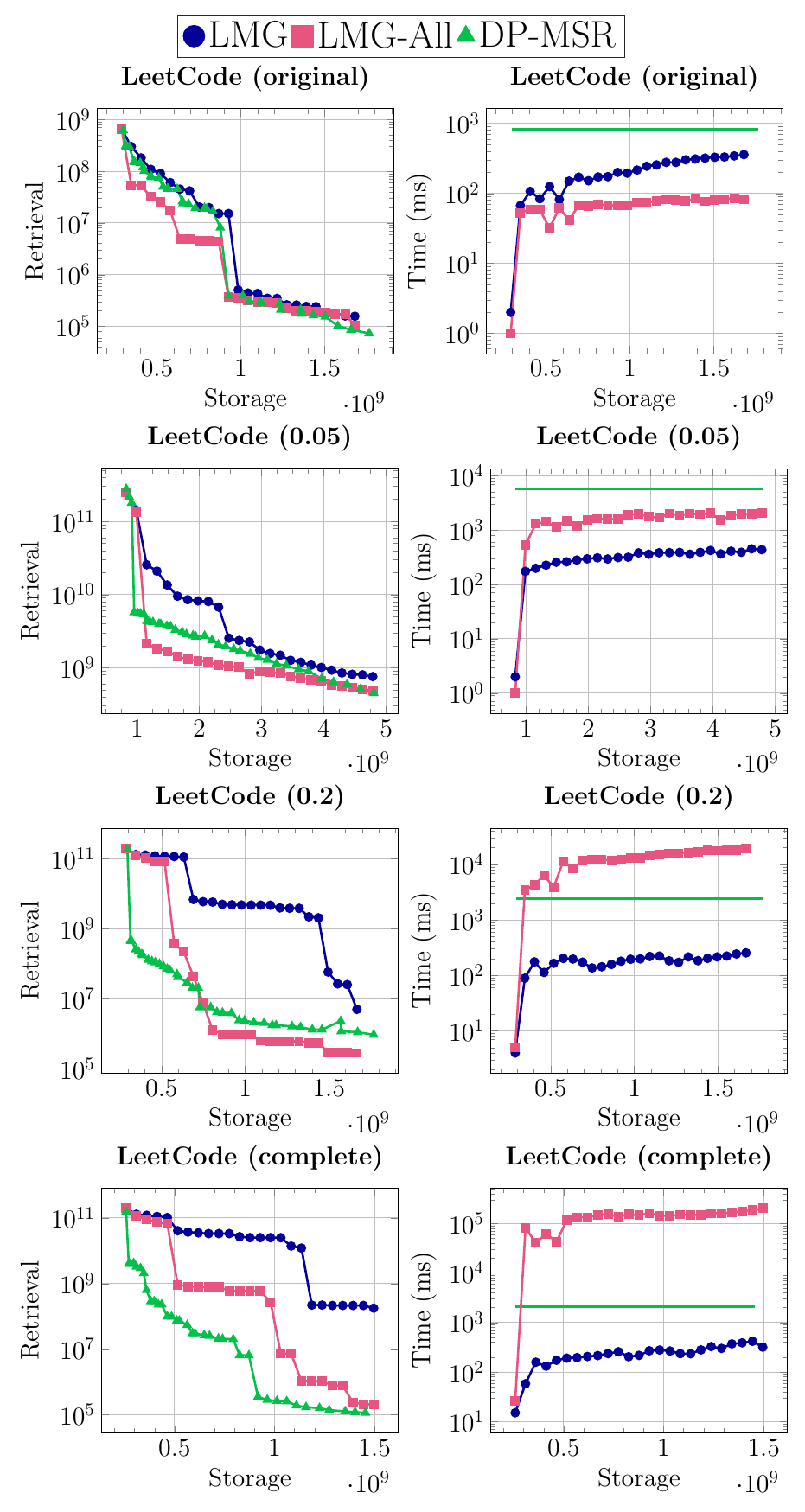}
 \caption{Performance and run time of MSR algorithms on compressed ER graphs.}
     \label{fig:MSR-performance-comparison-compressed-ER}
\end{figure}

\cref{fig:MSR-performance-comparison}, \cref{fig:MSR-performance-comparison-compressed}, and \cref{fig:MSR-performance-comparison-compressed-ER} demonstrate the performance of the three MSR algorithms on natural graphs, compressed natural graphs, and compressed ER graphs. The running times for the algorithms are shown in \cref{fig:MSR-performance-comparison-compressed} and \cref{fig:MSR-performance-comparison-compressed-ER}. Since run time for most non-ER graphs exhibit similar trends, many are omitted here due to space constraint. Also note that, since \DPMSR{} generates all data points in a single run, its running time is shown as a horizontal line over the full range for storage constraint.

We run \DPMSR{} with $\epsilon=0.05$ on most graphs, except $\epsilon = 0.1$ for \texttt{freeCodeCamp} (for the feasibility of run time). The pruning value for DP variables is at twice the minimum storage for uncompressed graphs, and ten times the minimum storage for randomly compressed graphs. 

\subparagraph*{Performance analysis.} 
On most graphs, \DPMSR{} outperforms \LMGA{}, which in turn outperforms \LMG{}. This is especially clear on natural version graphs, where \DPMSR{} solutions are near 1000 times better than \LMG{} solutions on \texttt{996.ICU}. 
in \cref{fig:MSR-performance-comparison}. On \texttt{datasharing}, \DPMSR{} almost perfectly matches the optimal solution (calculated via the ILP in \cref{appendix:ILP}) for all constraint ranges. 

On naturally constructed graphs (\cref{fig:MSR-performance-comparison}), \LMGA{} often has comparable performance with \LMG{} when storage constraint is low. This is possibly because both algorithms can only iterate a few times when the storage constraint is almost tight. \DPMSR, on the other hand, performs much better on natural graphs even for low storage constraint.

On graphs with random compression (\cref{fig:MSR-performance-comparison-compressed}), the dominance of DP in performance over the other two algorithms become less significant. This is anticipated because of the fact that DP only runs on a subgraph of the input graph. Intuitively, most of the information is already contained in a minimum spanning tree when storage and retrieval costs are proportional. Otherwise, the dropped edges may be useful. (They could have large retrieval but small storage, and vice versa. )

Finally, \LMG{}'s performance relative to our new algorithms is much worse on ER graphs. This may be due to the fact that \LMG{} cannot look at non-auxiliary edges once the minimum arborescence is initialized, and hence losing most of the information brought by the extra edges. (\cref{fig:MSR-performance-comparison-compressed-ER}). 

\subparagraph*{Run time analysis. }For all natural graphs, we observe that \LMGA{} uses no more time than \LMG{} (as shown in \cref{fig:MSR-performance-comparison-compressed}). Moreover, \LMGA{} is significantly quicker than \LMG{} on large natural graphs, which was unexpected considering that the two algorithms have almost identical structures in implementation. Possibly, this could be due to \LMG{} making bigger, more expensive changes on each iteration (materializing a node with many dependencies, for instance) as compared to \LMGA{}.


As expected, though, \LMGA{} takes much more time than the other two algorithms on denser ER graphs (\cref{fig:MSR-performance-comparison-compressed-ER}), due to the large number of edges. 

\DPMSR{} is often slower than \LMG{}, except when ran on the natural construction of large graphs (\cref{fig:MSR-performance-comparison-compressed}). However, unlike \LMG{} and \LMGA{}, the DP algorithm returns a whole spectrum of solutions at once, so it is difficult to make a direct comparison. We also note that the run time of DP heavily depends on the choice of $\epsilon$ and the storage pruning bound. Hence, the user can trade-off the run time with solution's qualities by parameterize the algorithm with coarser configurations.

\subsection{Results in BMR}

As compared to MSR algorithms, the performance and run time of our BMR algorithms are much more predictable and stable. They exhibit similar trends across different ways of graph construction as mentioned earlier in this section - including the non-tree-like ER graphs, surprisingly. 

\bob{Maybe we can remove this can just claim that the same trend occurs on every other dataset.}
Due to space limitation, we only present the results on natural graphs, as shown in \cref{fig:BMR-time-comparison}, to respectively illustrate their performance and run time. 

\subparagraph*{Performance analysis. }
For every graph we tested, \DPBMR{} outperforms MP on most of the retrieval constraint ranges. As the retrieval constraint increases, the gap between MP and \DPBMR{} solution also increases. We also observe that \DPBMR{} performs worse than MP when the retrieval constraint is at zero. This is because the bidirectional tree have fewer edges than the original graph. (Recall that the same behavior happened for \DPMSR{} on compressed graphs) 

We also note that, unlike MP, the objective value of \DPBMR{} solution monotonically decreases with respect to retrieval constraint. This is again expected since these are optimal solutions of the problem on the bidirectional tree.  

\subparagraph*{Run time analysis. }
For all graphs, the run times of \DPBMR{} and MP are comparable within a constant factor. This is true with varying graph shapes and construction methods in all our experiments, and representative data is exhibited in \cref{fig:BMR-time-comparison}. Unlike \LMG{} and \LMGA{}, their run times do not change much with varying constraint values. 

\begin{figure}[h!]
  \centering
  \includegraphics[width=.55\columnwidth]{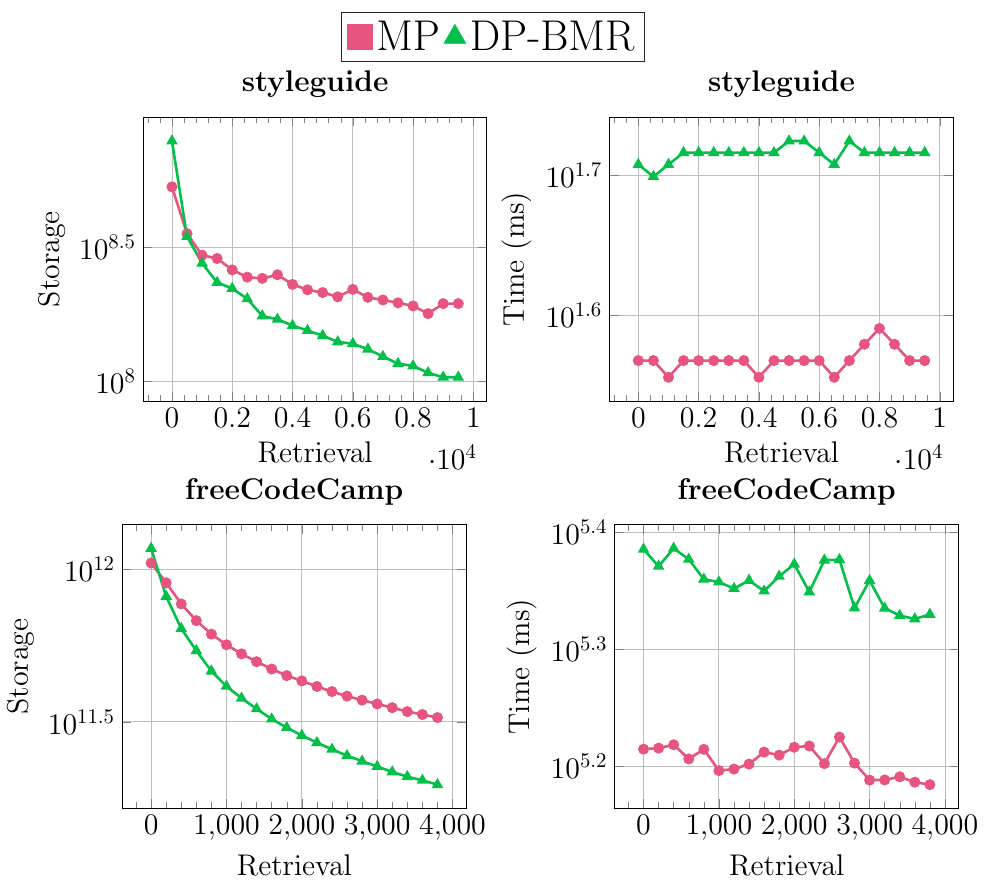}
 \caption{Performance and run time of BMR algorithms on natural version graphs. }
     \label{fig:BMR-time-comparison}
\end{figure}

\subparagraph*{Overall evaluation}
For MSR, we recommend always using one of \LMGA{} and \DPMSR{} in place of \LMG{} for practical use. On sparse graphs, \LMGA{} dominates \LMG{} both in performance and run time. \DPMSR{} can also provide a frontier of better solutions in a reasonable amount of time, regardless of the input. 

For BMR, \DPBMR{} usually outperforms MP, except when the retrieval constraint is close to zero. Therefore, we recommend using DP in most situations. 

\section{Conclusion}
In this paper, we developed fully polynomial time approximation algorithms for graphs with bounded treewidth. This often captures the typical manner in which edit operations are applied on versions. 
For practical use, we extracted the idea behind this approach as well as previous \LMG{} approach, and developed heuristics which significantly improved both the performance and run time in experiments, while potentially allowing for parallelization. 



\bibliographystyle{plainurl}
\bibliography{ref}

\appendix
\section{Approximation Algorithms}
\label{appendix:approx-alg-def}
We hereby define the notion of approximation algorithms used in this paper.

\begin{definition}
[$\rho$-approximation algorithm]
Let $\mathcal P$ be a minimization problem where we want to come up with a feasible solution $x$ satisfying some constraints (e.g., $a\cdot x \leq b$). We say that an algorithm $\mathcal A$ is a $\rho$-approximation algorithm for $\mathcal P$ if $x_{\mathcal A}$, the solution produced by $\mathcal A$ is feasible and that $OPT \leq f(x_{\mathcal A}) \leq \rho \cdot OPT$ where $OPT$ is an optimal objective value and $f(x)$ is the objective value of a solution $x$. Here, $\rho$ is the \emph{approximation ratio}.
Generally, we want $\mathcal A$ to run in polynomial time. 
\end{definition}

\begin{definition}[Polynomial-time approximation scheme (PTAS)]
A polynomial-time approximation scheme is an algorithm $\mathcal A$ that, when given any fixed $\epsilon>0$, can produce an $(1+\epsilon)$-approximation in time that is polynomial in the instance size. We say that $\mathcal A$ is a \emph{fully polynomial-time approximation scheme (FPTAS)} if the runtime of $\mathcal A$ is polynomial in both the instance size and $1/\epsilon$. 
\end{definition}

\begin{definition}[Bi-criteria approximation]
In problems such as ours where optimizing an objective function while meeting all constraints is challenging, we can consider relaxing both aspects. We say that an algorithm $\mathcal A$ $(\alpha,\beta)$-approximates problem $\mathcal P$ if the objective value of its output is at most $\alpha$ times the objective value of an optimal solution \textbf{and} the constraints are violated at most $\beta$ times.\footnote{We allow $x \leq \beta y$ if the constraint $x \leq y$ is presented.}
\end{definition}

\section{Optimization Problems with Known Hardness Results}\label{appendix:known-hardness-problems}
We hereby define a few problems with known hardness results that reduce to one of the versioning problems. 

\label{subsec:hard-problems}
Before we show our hardness results, it is useful to introduce several other \NP-hard problems to reduce from. 
\begin{definition}[\textsc{Set Cover}]
    Elements $U=\{o_1,\ldots,o_n\}$ and subsets $S_1,\ldots,S_m \subseteq U$ are given. 
    The goal is to find $A\subseteq [m]$ with minimum cardinality such that $\bigcup_{i\in A}S_i = U$. 
\end{definition}
\textsc{Set Cover} has no $c\ln n$-approximation for any $c<1$, unless $\textsc{P} = \NP$~\cite{setCoverNPhard}. 

\begin{definition}[\textsc{Subset Sum}]
Given real values $a_1,\ldots,a_n$ 
and a target value $T$. 
The goal is to find $A\subseteq [n]$ such that $\sum_{i\in A} a_i$ is maximized but not greater than $T$. 
\end{definition}
\textsc{Subset Sum} is also \NP-hard, but its FPTAS is well studied \cite{subsetSumFPTAS1,subsetSumFPTAS2,subsetSumFPTAS3,subsetSumFPTAS4,subsetSumFPTAS5}.

\begin{definition}[\textsc{k-median} and \textsc{Asymmetric k-median}]
    Given nodes $V=\{1,\ldots,n\}$, $k$, and symmetric (resp. asymmetric) distance measures $D_{i,j}$ for $i,j\in V$ that satisfies triangle inequality. The goal is to find a set of nodes $A\subseteq V$ of cardinality at most $k$ that minimizes 
    \[\sum_{v\in V}\min_{c\in A} D_{v,c}.\]
\end{definition}
The symmetric problem is well studied. The best known approximation lower bound for this problem is $1+\frac{1}{e}$. We note that an inapproximability result of $1+\frac{2}{e}$ \cite{symmetric-k-median-hardness-correct} is often mistakenly quoted for this problem, whereas the authors actually studied the $k$-median variant where the ``facilities'' and ``clients'' are in different sets. With the same method we can only get the hardness of $1+1/e$ in our definition. 

The asymmetric counterpart is rarely studied. The manuscript ~\cite{unpublishedArcher} showed that there is no $(\alpha,\beta)$-approximation ($\beta$ is the relaxation factor on $k$) if $\beta\leq \frac{1}{2}(1-\epsilon)(\ln n-\ln \alpha-O(1))$, unless $\NP\subseteq\DTIME(n^{O(\log\log n)})$. 

Notably, even symmetric $k-$median is inapproximable when triangle inequality is not assumed on the distance measure $D$. \cite{kmedianInapproximable} However, this hardness is not preserved by the standard reduction to \MSR{} (as in \cref{subsubsec:MSRMMRhardness}), since the path distance on graphs inherently satisfies triangle inequality. 

\begin{definition}[\textsc{k-center} and \textsc{Asymmetric k-center}]
    Given nodes $V=\{1,\ldots,n\}$, $k$, and asymmetric distance measures $D_{i,j}$ for $i,j\in V$ that satisfies triangle inequality. The goal is to find a set of nodes $A\subseteq V$ of cardinality at most $k$ that minimizes 
    \[\max_{v\in V}\min_{c\in A}D_{v,c}.\]
\end{definition}
The symmetric problem has a greedy 2-approximation, which is optimal unless P$=\NP$ \cite{SymmetrickCenterHardness}.

The asymmetric variant has $\log^* k$-approximation algorithms~\cite{k-center-approx}, and one cannot get a better approximation than $\log^* n$ unless $\NP\subseteq \DTIME(n^{O(\log\log n)})$, if we allow $k$ to be arbitrary \cite{k-center-hardness}.

\section{Reduction from General Trees to Binary Trees (Section \ref{subsec:tree-dp})}\label{appendix:WLOG-binary-tree}
\begin{lemma}
If algorithm $\calA$ solves \BMR{} on binary tree instances in $O(f(n))$ time where $n$ is the number of vertices in the tree, then there exists algorithm $\calA'$ solving \BMR{} on all tree instances in $O(f(2n))$ time. 
\end{lemma}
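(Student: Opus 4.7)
The plan is to invoke the standard \emph{vertex-splitting} reduction: given a (bidirectional) tree instance of \BMR{} on $n$ vertices, we build an equivalent binary-tree instance $T'$ on at most $2n-1$ vertices, run $\calA$ on it, and translate the output back.

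First I would root the input tree $T$ at an arbitrary vertex. For every vertex $v$ with $k\geq 3$ children $c_1,\ldots,c_k$, I replace $v$ with a chain of copies $v_1,v_2,\ldots,v_{k-1}$ connected by bidirectional edges of zero storage and retrieval cost. The parent-edge of $v$ (if any) is attached to $v_1$ with its original weights; the child-edges $(v,c_i)$ for $i\leq k-2$ are attached to $v_i$, and both $(v,c_{k-1})$ and $(v,c_k)$ are attached to $v_{k-1}$, all preserving the original weights in both directions. Materialization costs are set to $s_{v_1}=s_v$ and $s_{v_i}=\infty$ (or any value strictly larger than $\sum_{u\in V}s_u$) for $i\geq 2$. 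Vertices with at most two children are left unchanged, so each vertex of $T'$ has at most two children, and $T'$ is a binary bidirectional tree.

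Next I would establish cost equivalence. Given a feasible solution $(M,F)$ on $T$ with $\max_v R(v)\leq \calR$, define the induced solution on $T'$ that materializes $v_1$ whenever $v\in M$, stores every zero-weight chain edge, and maps each stored edge of $F$ to the corresponding edge incident to the appropriate copy. Because each chain is zero-weight in both directions, $R(v_i)=R(v)$ for every copy, so the max-retrieval constraint is preserved and the total storage is unchanged. Conversely, any feasible solution on $T'$ with finite storage cannot materialize $v_i$ for $i\geq 2$; contracting each chain back into a single vertex in the obvious way yields a feasible solution on $T$ with identical cost. Hence the \BMR{} optima on $T$ and $T'$ coincide, and both directions of the translation run in $O(n)$ time.

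Finally I would bound the blow-up. Each vertex with $d_v\geq 2$ children contributes $d_v-1$ copies, while each vertex with $d_v\leq 1$ contributes one copy; using $\sum_v d_v=n-1$,
\[
|V(T')| \;=\; \sum_v \max(d_v-1,\,1) \;\leq\; \sum_v (d_v+1) \;=\; (n-1)+n \;=\; 2n-1 \;<\; 2n.
\]
Thus $\calA'$ builds $T'$ in $O(n)$ time, invokes $\calA$ in $O(f(2n))$ time, and translates the output back in $O(n)$ time, for a total of $O(f(2n))$. The main obstacle is handling the ``infinite'' materialization costs cleanly: this is resolved by the observation that in any optimal \BMR{} solution on $T'$ no copy $v_i$ with $i\geq 2$ is ever materialized, since redirecting such a materialization to $v_1$ (reachable via zero-cost edges) is feasible and strictly cheaper. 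If $\calA$ cannot handle infinite weights directly, one may instead substitute a sufficiently large finite value, or post-process $\calA$'s output to merge redundant chain materializations.
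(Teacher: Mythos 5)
Your proposal is correct and is essentially the paper's own construction: the paper splits a high-degree vertex by repeatedly hanging a zero-weight copy $v'$ off $v$ and transferring all but one child to it, which after iteration produces exactly your chain $v_1,\dots,v_{k-1}$ with the same $\leq 2n$ vertex bound. Your write-up is somewhat more careful than the paper's sketch (explicitly fixing the materialization costs of the copies and arguing both directions of the cost equivalence), but it is the same reduction.
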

\begin{proof}[Proof Sketch]
If a node $v$ has more than two children, we modify the graph as follows:
\begin{bracketenumerate}
    \item Create node $v'$ and attach it as a child of $v$.
    \item Move all but the left-most children of $v$ to be children of $v'$
    \item Set the deltas of $(v,v') = (v',v) = 0$; set $(v',c_i) = (v, c_i)$ and $(c_i,v') = (c_i,v)$ for all transferred children $c_i$.
\end{bracketenumerate}
By repeating this process we obtain a binary tree with $\leq 2n$ nodes which has the same optimal objective value as before. Hence, after producing a binary tree, we can utilize the algorithm for binary tree to solve \BMR{} on any tree. 
\end{proof}


\begin{figure}[h!]
\begin{align*}
    S_1 &= s_v \\& 
    + \min_{\rho_1 + \rho_2 = \rho} \Big\{ 
    \min_{k_1, \gamma_1} \{DP[c_1][k_1][\gamma_2][\rho_1]\} 
    + \min_{k_2, \gamma_2} \{DP[c_2][k_2][\gamma_2][\rho_2]\Big\} \\
    S_2 & = s_v + s_{v,c_1} - s_{c_1}
    \\& +\min_{\rho_1 \le \rho}\Big\{DP[c_1][k-1][0][\rho_1 - (k-1) r_{v,c_1}] 
    + \min_{k',\gamma_2}\{ DP[c_2][k'][\gamma_2][\rho-\rho_1] \}\Big\}\\
    S_3 &= s_v + s_{v,c_2} - s_{c_2} \\& 
    + \min_{\rho_1 + \rho_2 = \rho}\Big\{\min_{k',\gamma_1}\{ DP[c_1][k'][\gamma_1][\rho_1] \}+ DP[c_2][k-1][0][\rho_2 - (k-1) r_{v,c_2}] \Big\} \\
    S_4 &= s_v + s_{v,c_1} - s_{c_1}+ s_{v,c_2} - s_{c_2} \\&
    + \min_{\rho_1 + \rho_2 = \rho}\min_{k_1 + k_2 = k-1}\Big\{ DP[c_1][k_1][0][\rho_1 - k_1 r_{v,c_1}] + DP[c_2][k_2][0][\rho_2 - k_2 r_{v,c_2}]\Big\} \\
    S_5 &= s_{c_1,v}
    \\& +\min_{\rho_1 \le \rho}\Big\{\min_{k_1}\{DP[c_1][k_1][\gamma - r_{c_1,v}][\rho_1 - \gamma]\} + \min_{k_2,\gamma'}\{ DP[c_2][k_2][\gamma'][\rho -\rho_1] \}\Big\}\\
    S_6 &= s_{c_2,v} \\&
    + \min_{\rho_1 + \rho_2 = \rho}\Big\{ \min_{k_2}\{DP[c_2][k_2][\gamma - r_{c_2,v}][\rho_2 - \gamma]\}
    +  \min_{k_1,\gamma'}\{DP[c_1][k_1][\gamma'][\rho_1]\} \Big\} \\
    S_7 &= s_{c_2,v} + s_{v,c_1} - s_{c_1} \\&
    + \min_{\rho_1+\rho_2 = \rho} \Big\{ DP[c_1][k-1][0][\rho_1 - (k-1)\cdot (r_{v,c_1}+\gamma)]
    + \min_{k'}\{DP[c_2][k'][\gamma - r_{c_2,v}][\rho_2 - \gamma]\}\Big\} \\
    S_8 &= s_{c_1,v} + s_{v,c_2} - s_{c_2} \\&
    + \min_{\rho_1+\rho_2 = \rho} \Big\{ \min_{k'}\{DP[c_1][k'][\gamma - r_{c_1,v}][\rho_1 - \gamma]\}
    + DP[c_2][k-1][0][\rho_2 - (k-1)\cdot (r_2+\gamma)]\Big\}\\
\end{align*}
\label{appendix:All-8-cases-tree-DP}
\caption{Recursion formulas for all 8 types of connections, for DP-MSR on bidirectional trees. }
\end{figure}

\section{Integer Linear Program (ILP) for MSR}\label{appendix:ILP}

In the following formulation, we have integer variables $\{x_e\}$ representing how many $v\in V$ is retrieved through the edge $e$. $I_e$ is a Boolean variable denoting whether edge $e$ is stored. 

We work on the extend graph with an auxiliary node: We use $V = V' \cup \{ v_0 \}$ where $V'$ are the versions and $v_0$ an auxiliary node. The materialization of any $v \in V'$ is represented by storing an edge $(v_0, v)$ with storage cost $s_v$ and retrieval cost $0$. 

\begin{equation*}
\begin{array}{rlllr}
    \min & \displaystyle\sum_{e \in E} r_e x_e  & & &\text{s.t.} \\
    & x_e\leq \vert V-1\vert I_e &&&\text{(indicator constraint)}\\
    & \displaystyle\sum_{e \in E} s_e I_e & \leq \calR & & \text{(storage cost)}\\
    & \displaystyle\sum_{e \in In(u)} x_e  &= \displaystyle\sum_{e \in Out(u)} x_e + 1 & ~\forall u \in V \setminus \{v_{aux}\} &\text{(sink)} \\
    & x_e &\in \{0,1,\ldots,\vert V\vert\}\\
    & I_e &\in \{0,1\}
\end{array}
\end{equation*}

\end{document}